\DeclarePairedDelimiter{\ceil}{\lceil}{\rceil}
\newcommand*{\funcfont}{\fontfamily{lmss}\selectfont}
\newcommand*{\codefont}{\ttfamily\small}
\tikzstyle{thmbox} = [
\DeclareTextFontCommand{\funcfontify}{\funcfont}
\DeclareTextFontCommand{\codefontify}{\codefont}
\newcommand{\codify}[1]{\ensuremath{\mbox{\codefontify{#1}}}}
\lstdefinelanguage{acl2s}{
  keywords={defdata,definecd,definec,defun,property,create-map*,create-reduce*},
  comment=[l]{;;},
}
\itshape\color{gray},
\newtheorem{definition}{Definition}
\newcounter{rlemmacounter}
\newcounter{clemmacounter}
\newcounter{blemmamcounter}
\newcounter{blemmaacounter}
\newcounter{blemmabcounter}
\newtheorem{lemma}{Lemma}
\newtheorem{theorem}{Theorem}
\newtheorem{rlemma}[rlemmacounter]{Lemma}
\newtheorem{clemma}[clemmacounter]{Lemma}
\newtheorem{blemmam}[blemmamcounter]{Lemma}
\newtheorem{blemmaa}[blemmaacounter]{Lemma}
\newtheorem{blemmab}[blemmabcounter]{Lemma}
\title{A Case Study in Analytic Protocol Analysis in ACL2}
\author{Max von Hippel\footnote{Authors are listed alphabetically by last name.}
\institute{Northeastern University\\Boston, Massachusetts}
\email{vonhippel.m@northeastern.edu}
\and
Panagiotis Manolios
\institute{Northeastern University\\Boston, Massachusetts}
\email{p.manolios@northeastern.edu}
\and
Kenneth L. McMillan
\institute{University of Texas at Austin\\Austin, Texas}
\email{kenmcm@cs.utexas.edu}
\and
Cristina Nita-Rotaru
\institute{Northeastern University\\Boston, Massachusetts}
\email{c.nitarot@northeastern.edu}
\and
Lenore Zuck
\institute{University of Illinois Chicago\\Chicago, Illinois}
\email{zuck@uic.edu}
}
\newcommand{\titlerunning}{Real Analysis Using ACL2}
\newcommand{\authorrunning}{von Hippel et. al.}
\newcommand{\sample}{\mbox{\sf S}\xspace}
\newcommand{\srtt}{\mbox{\sf srtt}\xspace}
\newcommand{\rttvar}{\mbox{\sf rttvar}\xspace}
\newcommand{\rto}{\mbox{\sf rto}\xspace}
\newcommand{\acls}{\textsc{ACL2s}\xspace}
\begin{document}
\maketitle

\begin{abstract}
When verifying computer systems we sometimes want to study their asymptotic behaviors,
i.e., how they behave in the long run.
In such cases, we need \emph{real analysis},
the area of mathematics that deals with limits and the foundations of calculus.
In a prior work, we used real analysis in 
ACL2s 
to study the asymptotic 
behavior of the RTO computation, commonly used in congestion control algorithms 
across the  Internet.
One key component in our RTO computation analysis was proving in 
ACL2s that for all $\alpha \in [0, 1)$, the limit as $n \to \infty$ of $\alpha^n$ is zero.
Whereas the most
obvious proof strategy involves the logarithm, whose codomain includes
irrationals, by default ACL2 only supports rationals, 
which forced us to take a non-standard approach.
In this paper, we explore different approaches to proving the above result
in ACL2(r) and ACL2s, from the perspective of a relatively
new user to each.  
We also contextualize the theorem by showing how it allowed us to prove important
asymptotic properties of the RTO computation.
Finally, we discuss tradeoffs between the various proof strategies and 
directions for future research.
\end{abstract}

\section{Introduction}\label{sec:intro}
In contrast to purely mathematically oriented theorem provers, 
ACL2 is designed specifically with the verification of computer systems in mind.
This focus manifests in a variety of places across the ACL2 software landscape,
such as the automated proofs of termination based on context-calling graphs
provided by
ACL2s~\cite{DillingerMVM07,acl2s11,manolios2006termination}, the
integration of QuickLisp,  
or the development of advanced string-solving 
capabilities~\cite{kumar2021mathematical}.  It also manifests in what ACL2 lacks:
e.g., ACL2 does not support irrationals such as 
$e, \pi,$ or $\sqrt{2}$, meaning it cannot be used
to reason about the reals in full generality.  Although this limitation is 
often immaterial, it shows up when we want
to study the asymptotic behaviors of computer systems, meaning how they behave
in the long run.  In such cases we require
\emph{real analysis}, the area of mathematics that deals with limits and the 
foundations of calculus.  However, in general real analysis proofs are
riddled with \emph{real} numbers; e.g., the logarithm is often used in 
these proofs, and it is often the case that the logarithm of a rational number
will be irrational.  If all we want is to prove a real analysis 
result, we can use ACL2(r)~\cite{gamboa2001nonstandard}, the variant of ACL2 
that supports real numbers.  Unfortunately, ACL2(r) proofs cannot be imported into a normal
ACL2 environment because the two systems have conflicting underlying theories, 
e.g., in ACL2 it is a theorem that $\forall x \,::\, x^2 \neq 2$, while in 
ACL2(r), $\sqrt{2}$ is a number.
So, what if we have formalized and studied a computer system in 
ACL2, and now we want to look at its asymptotic behaviors, without needing to
port our model to ACL2(r)?  

In this work we focus on one such scenario, drawn from our recent work~\cite{karn}
studying Karn's algorithm~\cite{karn1987improving} and the related 
Retransmission TimeOut (RTO) computation~\cite{rfc6298}.
This work can be viewed as a companion paper to our work in~\cite{karn},
with a special focus on the proof strategy for the RTO observations.
Next, we briefly summarize our work in~\cite{karn} as it provides context for this study.

\subsection{Motivating Example}

In order to understand the RTO computation and its motivation, 
we first need to understand the context in which it is used.
In the real world this context is an Internet connection between two computers,
such as might occur between a sender and receiver using TCP.
In the most general sense, the context consists of 
two endpoints communicating over a \emph{channel}.
The endpoints send and receive \emph{datagrams} partitioned into
\emph{packets} (that the sender transmits to the receiver) and \emph{acknowledgments}, or ACKs
(that the receiver transmits to the sender), and each datagram is uniquely identifiable
by its natural id and type (packet or ACK).
The channel is responsible for delivering messages transmitted from one endpoint to the other.
However, it might not do so reliably.
It cannot create new messages, but may reorder, drop, 
delay, or duplicate transmitted ones.\footnote{This model is less strict than the typical IP one where duplication is disallowed.}

We assume the sender does not transmit a packet $p > 1$ unless it previously transmitted all packets in
$[1, p-1]$, although it may transmit $p=1$ at any time.
We additionally assume that ACKs are cumulative in the sense that, 
the receiver does not transmit an ACK $a > 1$ unless it was previously delivered
packets $1, 2, \ldots, a-1$ but not $a$.
In the event that the receiver cannot cumulatively acknowledge anything, 
for instance, if it was delivered the packet 2 but never 1,
then it may transmit the trivial acknowledgment of 1 (which does not acknowledge any packets).
We also assume the receiver transmits an ACK whenever it is delivered a packet.
When the sender receives an acknowledgment~$a$, it considers~$a$ to be \emph{new} iff $a$ exceeds all the ACKs it received previously.  In other words,~$a$ is new iff it acknowledges at least one previously un-acknowledged packet.

In the real world, the channel may have limited bandwidth, and will start to lose datagrams
when its queues become full\footnote{Such lossy communication is commonly modeled using a so-called ``token bucket filter''.}.  This bandwidth is unknowable to either endpoint,
so the sender is forced to use ACKs to
assess the instantaneous state of the channel and react accordingly.  It does so in two ways.
First, the sender measures the round-trip time (RTT) using its local clock between when it first transmits a packet~$p$ and when it first receives any acknowledgment~$a > p$, as an indication of the pace at which the channel is delivering datagrams.  It can use this information to moderate its transmission rate.
Second, if no new ACKs arrive for some amount of time, the sender can assume the channel has been  overwhelmed and is dropping data.  In this case, it can slow its pace of transmission, and begin retransmitting unacknowledged data accordingly.  The time the sender will wait before timing out, slowing its pace, and retransmitting, is called the RTO and defined in RFC6298~\cite{rfc6298}.

The conjunction of these two mechanisms creates a problem:
suppose the sender retransmits an unacknowledged packet~$p$, then receives some new
ACK~$a > p$.  How does it know which transmission of~$p$ triggered the ACK?  
The estimated RTT will differ depending on the answer.
We illustrate this situation in Figure~\ref{fig:ambiguity}\footnote{Icons are from \url{https://openmoji.org/}}.
One solution, known as Karn's algorithm, is to only sample RTTs for packets that were transmitted precisely once, since ACKs for these packets are unambiguous~\cite{karn1987improving}.

\begin{figure}
\centering  
\includegraphics[width=0.7\textwidth]{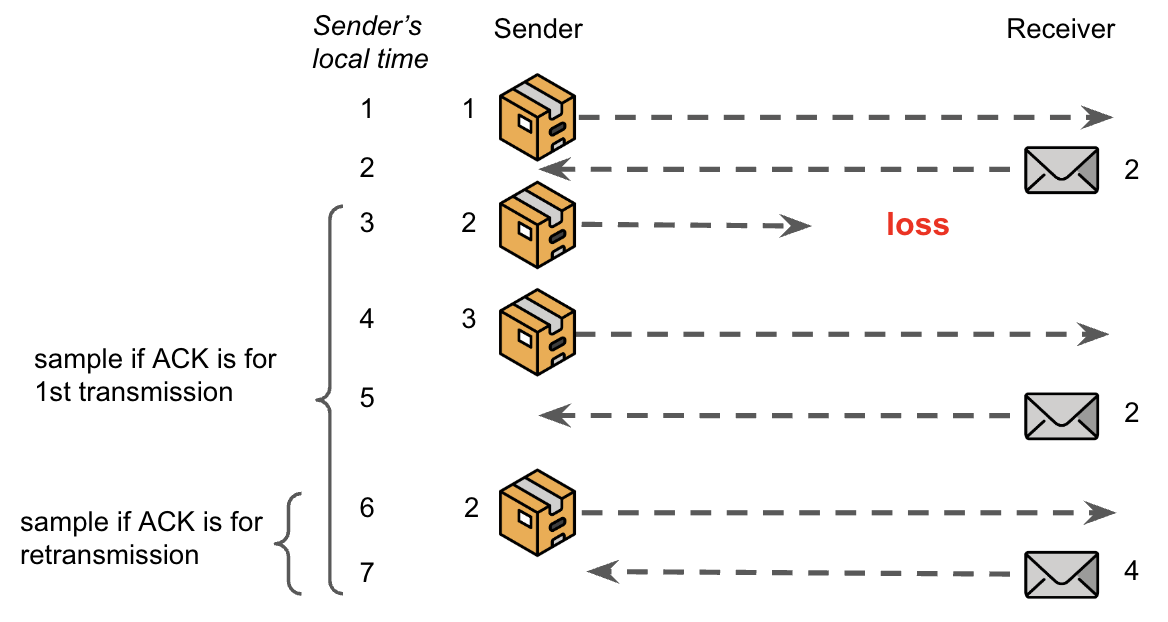}
\caption{Message sequence chart illustrating an ambiguous ACK, with the sender's local clock shown on the left. Sender's packets are illustrated as packets, while receiver's ACKs are shown as envelopes.  The first time the sender transmits 2 the packet is lost in-transit. Later, upon receiving a cumulative ACK of 2, the sender determines the receiver had not yet received the 2 packet and thus the packet might be lost in transit. It thus retransmits 2. Ultimately the receiver receives the retransmission and responds with a cumulative ACK of 4. When the sender receives this ACK it cannot determine which 2 packet delivery triggered the ACK transmission and thus, it does not know whether to measure an RTT of 7-3=4 or 7-6=1. Hence, the ACK is ambiguous, so any sampled RTT would be as well.}
\label{fig:ambiguity}
\end{figure}

We formally modeled Karn's algorithm in Ivy~\cite{padon2016ivy} with the network model outlined above.  We proved various inductive invariants about the algorithm, including that it samples a real RTT, that this RTT is in some sense pessimistic, and that when the receiver-to-sender channel path is FIFO, this RTT is for the packet whose id is equal to the previously highest-received acknowledgment.
Then, we formally modeled the RTO computation in ACL2s.
We chose ACL2s over Ivy because the computation is defined over real numbers, which we chose to model as rationals, and Ivy only supports integers.\footnote{In practice, the implementations we are aware of use integers, however, when Karn and Partridge wrote the algorithm down on paper, they did so using reals.}
And in particular, we chose ACL2s over ACL2 because we made frequent use of its features.
For example, we use the built-in counterexample generation to show that a variable referred to 
as the ``RTT variance'' is not actually a statistical variance;
and in one of our proofs, we use an automated proof of function termination to prove the existence 
of a particular value (namely, the value returned by the function in question).

The RTO computation is recursively defined over the RTT samples $\sample_1, \sample_2, \ldots$ output by Karn's algorithm 
and parameterized by three positive constants ($\alpha < 1$, $\beta < 1$, and~$G$)
as follows.
\begin{equation}
\begin{aligned}
\rto_i & = \srtt_i + \max(G, 4\cdot \rttvar_i) \\
\rttvar_i & = \begin{cases}
\sample_i / 2 & \text{ if } i = 1 \\
(1-\beta) \rttvar_{i-1} + \beta \lvert \srtt_{i-1} - \sample_i \rvert & \text{ if } i > 1
\end{cases}\\
\srtt_i & = \begin{cases} 
    \sample_i & \text{ if } i = 1 \\
    (1-\alpha)\srtt_{i-1} + \alpha \sample_i & \text{ if } i > 1 
    \end{cases} \\
\end{aligned}
\label{eqn:rto}
\end{equation}
Note, we use ``RTO'' when discussing the calculation generally, and ``\rto'' when discussing its actual implementation (given in the equation above).

We looked at what we called the \emph{steady-state} where for some 
rational \emph{center}~$c$ and \emph{radius}~$r$, the samples 
$\sample_i, \sample_{i+1}, \ldots, \sample_{i+n}$ all fall 
within the bounds $[c-r, c+r]$, and we proved the following.
\begin{enumerate}[I.]
    \item The $\srtt_{i+n}$ is bounded by the interval $[L, H]$ defined as follows.
    \begin{equation}
    \begin{aligned} 
    L & = (1-\alpha)^{n+1} \srtt_{i-1} + (1-(1-\alpha)^{n+1})(c-r) \\
    H & = (1-\alpha)^{n+1} \srtt_{i-1} + (1-(1-\alpha)^{n+1})(c+r)
    \end{aligned}
    \label{eqn:bound:srtt}
    \end{equation}
    \item For all $0 \leq m < n$, $\rttvar_{i+n}$ is upper-bounded by the following expression.
    \begin{equation}
    \begin{aligned}
    (1-\beta)^{n+1-m} \rttvar_{i+m-1} 
    + (1-(1-\beta)^{n+1-m}) \Delta_m & \text{ , where } \\
    \Delta_m = (1-\alpha)^{m+1}\srtt_{i-1} + 2r - (1-\alpha)^{m+1}(c+r)
    \end{aligned}
    \label{eqn:bound:rttvar}
    \end{equation}
\end{enumerate}

Then we reanalyzed the results in the asymptotic case.
In other words, we asked what these bounds converge to as the number~$n$ of 
consecutively bounded samples, as well as the cutoff $m < n$ for the bound~$\Delta_m$ defined above,
grow toward infinity.
By \emph{limit}, we are referring to the standard definition\footnote{Limit proofs that assume an $\epsilon > 0$ and then prove the existence of a corresponding $\delta > 0$ satisfying this definition are commonly reffered to as $\epsilon/\delta$-proofs.} from real analysis, 
which we give below using the 1D Euclidean metric $d(x, y) = \lvert x - y \rvert$.
\begin{definition}[Limit to $\infty$]
Let $(a_i)_{i=0}^{\infty} \in \mathbb{R}^\omega$ and $\ell \in \mathbb{R}$.
Then $\lim_{i \to \infty} a_i = \ell$ \emph{iff} the following holds:
\[\forall \epsilon > 0 \,::\, \exists \delta > 0 \,::\, \forall n > \delta \, :: \, \lvert a_n - \ell \rvert < \epsilon\]
\label{def:limit}
\end{definition}
We first proved the following theorem, which is the focus of this paper.
\begin{theorem}\(\forall \alpha \in [0, 1) \, :: \, \lim_{n \to \infty} \alpha^n = 0\)
\label{thm:a-n-0}
\end{theorem}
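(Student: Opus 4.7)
The plan is to avoid logarithms entirely (which would take us outside the rationals) and instead exploit Bernoulli's inequality, $(1+x)^n \geq 1 + nx$ for $x \geq 0$ and $n \in \mathbb{N}$, which is an inequality between polynomials with rational coefficients and hence natively expressible in ACL2s. The idea is that, by writing $\alpha = 1/(1+\gamma)$ where $\gamma = (1-\alpha)/\alpha > 0$, Bernoulli gives a clean rational upper bound $\alpha^n \leq 1/(1 + n\gamma)$ that shrinks to zero at an explicit, controllable rate.

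First, I would dispatch the corner case $\alpha = 0$, where $\alpha^n = 0$ for every $n \geq 1$ and the witness $\delta = 1$ satisfies Definition~\ref{def:limit} trivially. Then, assuming $\alpha \in (0, 1)$, I would prove Bernoulli's inequality by straightforward induction on $n$: the step case is $(1+\gamma)^{n+1} = (1+\gamma)(1+\gamma)^n \geq (1+\gamma)(1 + n\gamma) = 1 + (n+1)\gamma + n\gamma^2 \geq 1 + (n+1)\gamma$, using that $\gamma > 0$. Next, inverting and using the positivity of both sides, I would derive $\alpha^n = (1+\gamma)^{-n} \leq 1/(1 + n\gamma)$. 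Finally, given an arbitrary $\epsilon > 0$, I would exhibit the explicit rational witness $\delta = 1/(\epsilon\gamma) = \alpha/(\epsilon(1-\alpha))$; for any $n > \delta$ we have $n\gamma > 1/\epsilon$, hence $\alpha^n \leq 1/(1 + n\gamma) < 1/(n\gamma) < \epsilon$, and since $\alpha^n > 0$ this gives $\lvert \alpha^n - 0 \rvert < \epsilon$ as required by Definition~\ref{def:limit}.

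The main obstacle I expect is not the mathematics but the ACL2 plumbing: threading the inductive proof of Bernoulli through ACL2s' automation in a way that cooperates with its arithmetic rewriting, and then packaging the explicit $\delta$ as a Skolem witness for the universally quantified limit definition. Converting the chain $\alpha^n \leq 1/(1+n\gamma) < 1/(n\gamma) < \epsilon$ into a form the prover accepts will likely require separately stated lemmas about positivity side-conditions, about monotonicity of reciprocation on positive rationals, and about the interaction of \textsf{expt} with multiplication, rather than a single monolithic hint. A secondary concern is that the witness $\delta$ depends on $\alpha$ as well as $\epsilon$, so the defun-sk encoding of the limit must pass $\alpha$ through as a free variable and be discharged uniformly over $\alpha \in [0,1)$.
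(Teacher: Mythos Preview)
Your argument is correct and is, in fact, a genuinely different (and cleaner) route than any of the three the paper takes. The paper's ACL2(r) proof uses $\delta = \ln(\epsilon)/\ln(\alpha)$; its ``ceiling proof'' bounds $\alpha^n$ by $k\alpha^k/n$ for $k=\lceil\alpha/(1-\alpha)\rceil$ via a bespoke induction; and its ``binomial proof'' first reduces to the special case $\alpha\le 1/2$ and then invokes the full binomial theorem to show $(1+p)^p \ge 2p^p$, squeezing a general $\alpha$ below $1/2$ in $p=\mathrm{numerator}(\alpha)$ steps. Your Bernoulli route sits naturally between the last two: like the ceiling proof it produces an $O(1/n)$ upper bound on $\alpha^n$, but via a standard named inequality rather than an ad hoc recurrence; and Bernoulli is exactly the two-term truncation of the binomial expansion the paper imports wholesale, so you get the payoff without needing the \texttt{arithmetic/binomial} book. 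The trade-off is mostly in the ACL2s plumbing you already flag: you will still need monotonicity-of-reciprocal and positivity side-lemmas (as the paper does in all its rational proofs), and note that the paper's ACL2s \texttt{defun-sk} requires \texttt{(natp d)}, so your witness will have to be $\lceil 1/(\epsilon\gamma)\rceil$ rather than the bare rational $1/(\epsilon\gamma)$---a trivial adjustment, but one that reintroduces a single ceiling step.
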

Theorem~\ref{thm:a-n-0} can be manually proven as follows.
\begin{proof}
Let $\epsilon > 0$ and $0 \leq \alpha < 1$ arbitrarily. 
If $\alpha=0$ the result is immediate; suppose $\alpha>0$.
Suppose $\epsilon < 1$, noting that if the theorem
holds for $\epsilon < 1$ then it holds for $\epsilon \geq 1$.
Let $\delta = \ln(\epsilon)/\ln(\alpha)$.
Note that $\ln(\epsilon)$ and $\ln(\alpha)$ are negative.
Let $n$ be some natural number and observe that $n \ln(\alpha) = \ln(\alpha^n)$.
Thus:
\[
\begin{aligned}
n > \delta & \iff n > \ln(\epsilon)/\ln(\alpha) & \text{ by definition of }\delta\\
           & \iff n \ln(\alpha) < \ln(\epsilon) & \text{ multiplying each side by }\ln(a)\\
           & \iff e^{n \ln(\alpha)} < e^{\ln(\epsilon)} & \text{ raising each side above }e\\
           & \iff e^{\ln(\alpha^n)} < \epsilon & \text{ because }e^{\ln(x)}=x\text{ for all }x\text{, and }n \ln(\alpha) = \ln(\alpha^n)\\
           & \iff \alpha^n < \epsilon & \text{ because }e^{\ln(x)}=x\text{ for all }x
\end{aligned}
\] 
\end{proof}
We then used this theorem to show that 
\(\lim_{n \to \infty} L = c-r\),
\(\lim_{n \to \infty} H = c+r\),
and the limit as $n$ and $m < n$ both grow toward infinity of the upper bound
in Eqn.~\ref{eqn:bound:rttvar} is precisely $2r$.

\paragraph{Outline.}
The rest of this paper is organized as follows.
We study Theorem~\ref{thm:a-n-0} in Sections~\ref{sec:real} and~\ref{sec:rational}.
Specifically, in Section~\ref{sec:real} we formalize the English-language proof given
above in ACL2(r).  But recall, we cannot use an ACL2(r) proof to study the RTO system 
we defined in ACl2s, without totally remodeling it, as the two proof systems are incompatible.
Thus in order to have our asymptotic proofs in the same model as our other preexisting 
proofs about the RTO system, we need
a rational proof.  We give two such proofs in Section~\ref{sec:rational}.
The first uses the ceiling function to define a $\delta$ directly.
The second begins by proving that $\lim_{n\to\infty} 1/2^n = 0$, then uses the binomial theorem
to construct a $\delta$ such that $n > \delta \implies \alpha^\delta < 1/2$.
For the second proof strategy, we show two different ways to prove $\lim_{n\to\infty} 1/2^n = 0$,
one of which is more automatic than the other.
With those proofs out of the way, we show how to derive the limits for the bounds on 
$\srtt$ and $\rttvar$ in Section~\ref{sec:context}, which concludes our analytic study of the RTO.
We discuss trade-offs between the various proofs and lessons learned in Section~\ref{sec:discussion} and conclude in Section~\ref{sec:conclusion}.

\section{Real Proof}\label{sec:real}
In this Section we overview the most obvious proof strategy for Theorem~\ref{thm:a-n-0} 
-- the one we gave in the introduction -- 
and its formalization in ACL2(r), the variant of ACL2 that supports
real numbers.  Because the rationals form a dense subset of the reals, 
this proof implies the desired result over the rationals as well.
However, since ACl2 and ACL2(r) are theoretically incompatible,
we cannot just import the proof into our preexisting ACL2s model to cohabitate
with our other theorems about the RTO calculation.

The theorem we aim to prove uses an existential quantifier, so we define it via \codify{defun-sk}.
Note that our theorem statement will be the same in Section~\ref{sec:rational}, except
that since we will be using ACL2s in those proofs, we will also have type declarations and guards there.
Notice how we can drop the absolute value signs from Definition~\ref{def:limit} because $\alpha$ is assumed to be positive, implying that $\alpha^n$ is also positive.
\begin{lstlisting}
(defun-sk lim-0 (a e n)
  (exists (d)
    (=> (^ (realp e) (< 0 e) (< d n)) (< (raise a n) e))))

(defthm lim-a^n->0
  (=> (^ (realp a) (< 0 a) (< a 1) (realp e) (< 0 e) (natp n))
      (lim-0 a e n)) :instructions ...) ;; proof will go here
\end{lstlisting}

The most important step in an $\epsilon/\delta$ proof is defining the $\delta$.
We do so by defining a witness function $d_a : \epsilon \to \delta$, so that the proof
obligation reduces to showing 
$\forall \epsilon > 0 \, :: \, \forall n > d_a(\epsilon ) \, :: \, \alpha^n < \epsilon$.
\begin{lstlisting}
(defun d (eps a) (/ (acl2-ln eps) (acl2-ln a)))
\end{lstlisting}

The remainder of the proof consists of two important steps.  First, we define a number of arithmetic lemmas which ACL2s proves automatically.  Second, because ACL2(r) lacks a generic real logarithm or exponent (having only the natural variants), we prove a translational lemma (\ref{lem:e-n-a-to-a-n}) saying that $e^{n \ln(\alpha)} = \alpha^n$.  Then we rephrase the proof from $\mathsection$\ref{sec:intro} in terms of~$e$, at which point it goes through easily.

\subsection{Arithmetic Lemmas}
We began by proving some basic arithmetic lemmas, which we needed for the more complicated proofs.
We proved that if $e^y < 1$ then $y < 0$, and that if $y \in (0, 1)$, then $\ln(y) < 0$.
Then we proved two facts about fractions of logarithms.
First, if $\alpha \in (0, 1)$ then \((\ln(\epsilon)/\ln(\alpha)) \ln(\alpha) = \ln(\epsilon)\).
Second, if $\epsilon$ and $\alpha$ both fall within $(0, 1)$
and $n > \ln(\epsilon)/\ln(\alpha)$, then \(n \ln(\alpha) < \ln(\epsilon)\),
and thus, 
since the natural exponent is monotonic (i.e., $x < y \implies e^x < e^y$),
it follows that \(e^{n \ln(\alpha)} < e^{\ln(\epsilon)} = \epsilon\).
Finally, combining these results gave us that if $\alpha$ and $n \in \mathbb{R}_+$ then
$\ln(\alpha^n) = \ln(e^{n \ln(\alpha)}) = n \ln(\alpha)$.

\subsection{Translational Lemma}
Our arithmetic lemmas allow us to prove the following translational result.
\begin{rlemma}
For all positive reals $\alpha$ and $n$, $e^{n \ln(\alpha)} = \alpha^n$.
\label{lem:e-n-a-to-a-n}
\end{rlemma}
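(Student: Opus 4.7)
The plan is to derive this translational identity as a direct corollary of the arithmetic lemmas from the previous subsection. Those lemmas already give the chain $\ln(\alpha^n) = \ln(e^{n \ln \alpha}) = n \ln \alpha$; all that remains is to strip the outer logarithm on the leftmost equality.

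Concretely, I would first cite the arithmetic lemma yielding $\ln(\alpha^n) = \ln(e^{n \ln \alpha})$, and then apply the inverse property $e^{\ln x} = x$ (valid for any positive $x$) to both sides. Since $\alpha > 0$ implies $\alpha^n > 0$, and $e^y > 0$ always, both applications are legal, so we obtain $\alpha^n = e^{n \ln \alpha}$. Equivalently, one may invoke injectivity of $\ln$ on the positive reals — a consequence of its strict monotonicity, which is already available in the ACL2(r) real-analysis library.

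If instead the underlying \codify{raise} function is defined by recursion over natural $n$ (as suggested by the \codify{natp n} guard on the main theorem, despite the informal statement quantifying $n$ over positive reals), then a cleaner alternative is induction on $n$. The base case $n = 0$ reduces to $e^{0} = 1 = \alpha^0$. The inductive step uses the additive law $e^{x+y} = e^x \cdot e^y$ together with the inverse property $e^{\ln \alpha} = \alpha$ to rewrite $e^{(n+1) \ln \alpha} = e^{n \ln \alpha} \cdot e^{\ln \alpha} = \alpha^n \cdot \alpha = \alpha^{n+1}$, closing by the inductive hypothesis.

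I expect the main obstacle to be bookkeeping rather than mathematics: arranging for the positivity side conditions ($\alpha > 0$ and $\alpha^n > 0$) to be in scope when the inverse-property lemma fires, and orienting the rewrite $e^{\ln x} = x$ so that it does not loop against the arithmetic lemma we are simultaneously using. A carefully chosen set of \codify{:use} and \codify{:in-theory} hints, or a short \codify{:instructions} script mirroring the two-line paper argument, should suffice.
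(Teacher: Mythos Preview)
Your primary approach is correct and matches the paper's: the paper does not spell out a proof for this lemma but simply states that it follows from the arithmetic lemmas of the preceding subsection, the last of which already establishes $\ln(\alpha^n) = \ln(e^{n \ln(\alpha)}) = n \ln(\alpha)$, so stripping the outer logarithm (via injectivity or the inverse identity $e^{\ln x} = x$) is exactly the intended step. Your inductive alternative is unnecessary here since the paper treats $n$ as a positive real in this lemma, but it is a sound fallback.
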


\subsection{Proof of Theorem~\ref{thm:a-n-0}}
Next we prove the desired result without loss of generality, and without explicitly using quantifiers.
Our proof uses Lemma~\ref{lem:e-n-a-to-a-n} as a hint.
\begin{rlemma}
Let $\alpha$ and $\epsilon$ be reals in $(0, 1)$ and let $n > d_\alpha(\epsilon)$ be a natural.
Then $\alpha^n < \epsilon$.
\label{lem:a-n-0-helper}
\end{rlemma}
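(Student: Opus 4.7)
The plan is to chain together the arithmetic lemmas already established in the preceding subsection and then apply Lemma~\ref{lem:e-n-a-to-a-n} as a rewriting step at the end. Unfolding $d_\alpha(\epsilon) = \ln(\epsilon)/\ln(\alpha)$, the hypothesis $n > d_\alpha(\epsilon)$ together with $\alpha,\epsilon \in (0,1)$ matches exactly the premise of the second fraction-of-logarithms lemma, which concludes $e^{n \ln(\alpha)} < e^{\ln(\epsilon)} = \epsilon$. So the first step is to invoke that lemma to obtain the inequality $e^{n \ln(\alpha)} < \epsilon$.

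The second step is a pure rewrite using Lemma~\ref{lem:e-n-a-to-a-n}: since $\alpha > 0$ and the natural $n$ is a nonnegative real, we have $e^{n \ln(\alpha)} = \alpha^n$, so the inequality just derived immediately becomes $\alpha^n < \epsilon$. No further case analysis on $n$ or on the sign of $\ln(\alpha)$ is needed at this point because all of that sign reasoning is already baked into the hypotheses of the arithmetic lemmas invoked in step one.

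The subtle point to watch is precisely that baked-in sign reasoning. Both $\ln(\epsilon)$ and $\ln(\alpha)$ are negative for $\epsilon,\alpha \in (0,1)$, so cross-multiplying $n > \ln(\epsilon)/\ln(\alpha)$ by $\ln(\alpha)$ flips the inequality to $n\ln(\alpha) < \ln(\epsilon)$; the monotonicity of the natural exponent then preserves direction when we exponentiate. I therefore expect the real obstacle to lie not in the mathematics but in getting ACL2(r) to deploy Lemma~\ref{lem:e-n-a-to-a-n} at the right moment: if the rewriter unfolds $e^{n \ln(\alpha)}$ in the wrong direction or fails to see $\alpha^n$ as a legitimate rewrite target, the goal can drift outside the shape that the prebuilt arithmetic rules recognize. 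Supplying Lemma~\ref{lem:e-n-a-to-a-n} as an explicit \codify{:use} hint, combined with the positivity and boundedness hypotheses on $\alpha$ and $\epsilon$, should be enough to discharge the remaining proof obligation mechanically.
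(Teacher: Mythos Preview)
Your proposal is correct and matches the paper's approach: the paper states only that the proof ``uses Lemma~\ref{lem:e-n-a-to-a-n} as a hint,'' and your plan of chaining the fraction-of-logarithms lemma to obtain $e^{n\ln(\alpha)} < \epsilon$ and then rewriting via the translational lemma is exactly that. One minor point: Lemma~\ref{lem:e-n-a-to-a-n} requires $n$ to be a \emph{positive} real, not merely nonnegative, but this is automatic here since $d_\alpha(\epsilon) = \ln(\epsilon)/\ln(\alpha) > 0$ forces $n > 0$.
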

Notice how this immediately gives us our desired result, because if $\epsilon < 1$ then we can just
use Lemma~\ref{lem:a-n-0-helper} directly, and if $\epsilon \geq 1$, we can use it with a new ``epsilon'' $< \epsilon$.  And this is precisely the strategy we take in the \codify{instructions} to our proof of Theorem~\ref{thm:a-n-0}, which go as follows:
\begin{enumerate}[i.]
    \item Promote all variables then case-split on $\epsilon < 1$.
    \item Case 1: $\epsilon < 1$.  Use Lemma~\ref{lem:a-n-0-helper}.  Then instantiate \codify{lim-0-suff} with $\delta = d_\alpha(\epsilon)$ and prove.
    \item Case 2: $\epsilon > 1$.  Use Lemma~\ref{lem:a-n-0-helper} with a new ``epsilon'' value of $\epsilon' = 1/2$ and claim that all its preconditions are satisfied.  Further claim that if $d_\alpha(1/2) < n$ then $\alpha^n < 1/2 < \epsilon$.  Then instantiate \codify{lim-0-suff} with $\delta = d_\alpha(1/2)$, promote, and prove.
\end{enumerate}

On the one hand, our proof is straightforward in the sense that it mostly follows the English-language
proof we outlined in the introduction.
On the other hand, we can easily see some places where more machinery and theorems in the nonstandard arithmetic
library would drastically simplify things.
The biggest ommision is that the nonstandard arithmetic library only supports natural exponent and logarithm, which forced us to use the translational lemma.  Most interestingly, this is not the shortest proof in this paper!  There is actually a more concise\footnote{(as measured by lines of code and number of imported books)}, rational proof which we outline in the next Section.

\section{Rational Proofs}\label{sec:rational}
In this Section, we reprove Theorem~\ref{thm:a-n-0} in ACL2s, 
using only rationals.
We present two proofs.  
The first is the one we used in our motivating work~\cite{karn},
where we explicitly construct the~$\delta$ using the ceiling function.
To do so, we have to prove various properties of that function.
The second proof is our most concise, and proceeds in two steps.
First we show that $\lim_{n \to \infty} 1/2^n = 0$.
Then using the binomial theorem,
we show how, for any $0 < \alpha < 1$, we can construct an $n$ such that $\alpha^n < 1/2$.
The result follows.
For convenience, we refer to the first proof 
(given in $\mathsection$\ref{subsec:ceil}) as the \emph{ceiling proof}
and the second 
($\mathsection$\ref{subsec:binom}) as the \emph{binomial proof}.
We recap in $\mathsection$\ref{subsec:closing}.

\subsection{Ceiling Proof}\label{subsec:ceil}
In prose, the ceiling proof of Theorem~\ref{thm:a-n-0} proceeds as follows.
\begin{proof}
Let $0 \leq \alpha < 1$ and $\epsilon > 0$, arbitrarily.
Let $k = \ceil{a/(1-a)}$ and observe that $a \leq k/(k+1)$.
Let $f(n) = k\alpha^k/n$.  
As an intermediary lemma, we claim that for all $n \geq k$, $\alpha^n \leq f(n)$.
\begin{description}
    \item \emph{Base Case}: $n = k$ thus $f(n) = \alpha^k \geq \alpha^n$ and we are done.  
    \item \emph{Inductive Step}: By inductive hypothesis, we have
    \begin{equation}
    a^n \leq k \alpha^k / n
    \label{eqn:ind}
    \end{equation}
    and $k \leq n$.
    This gives us $k/(k+1) \leq n/(n+1)$ and thus:
    \begin{equation}
    \alpha \leq n/(n+1)
    \label{eqn:a-ineq}
    \end{equation}
    Multiplying Eqn.~\ref{eqn:ind} through by $\alpha$, we get
    \(
    \alpha^{n+1} \leq k \alpha^{k+1} / n
    \).
    Combining this with Eqn.~\ref{eqn:a-ineq}:
    \begin{equation}
    \alpha^{n+1} \leq ( k \alpha^k / n ) \frac{n}{n+1} = k \alpha^k / (n+1)
    \end{equation}
    and we are done.
\end{description}
Hence induction: $\forall n \geq k$, $\alpha^n \leq f(n)$.  Now, 
let $\delta = \ceil{k\alpha^k/\epsilon}$.  It follows that 
$\forall n \geq \delta$, $f(n) \leq \epsilon$, and thus by the above result,
$\alpha^n \leq \epsilon$.  We get $\alpha^n < \epsilon$ by repeating this process
for $\epsilon / 2$, and we are done.
\end{proof}
Although the proof is relatively straightforward on paper,
as we will see, it is much more challenging in ACL2s.
The primary issue is that ACL2/ACL2s does not by default know very much about the ceiling function,
so we will be forced to prove many obvious lemmas before making the essential
argument.
Since we are in ACL2s now, we first restate Theorem~\ref{thm:a-n-0} with types.
\begin{lstlisting}
(defun-sk lim-0 (a e n)
  (declare (xargs :guard (and (posratp a) (< a 1) (posratp e) (natp n))
          :verify-guards t))
  (exists (d) (and (natp d) (implies (< d n) (< (expt a n) e)))))

(property lim-a^n->0 (a e :pos-rational n :nat)
  :hyps (< a 1)
  (lim-0 a e n) :instructions ...) ;; proof will go here
\end{lstlisting}
The rest of the subsection is organized in follows.
We prove arithmetic lemmas in 
$\mathsection$\ref{subsubsec:ceil-arith}.
We use these lemmas to prove the ``intermediary lemma'' 
in $\mathsection$\ref{subsubsec:ceil-int},
which we use to prove 
prove Thm.~\ref{thm:a-n-0}
in $\mathsection$\ref{subsubsec:ceil-a-n-0}.

\subsubsection{Arithmetic Lemmas}\label{subsubsec:ceil-arith}
In order to fill out the \codify{instructions}, we first need some lemmas,
primarily about the ceiling function.
\begin{clemma}
For all $x, y \in \mathbb{Q}_+$, if $\ceil{x} < \ceil{y}$ then (i) $x \leq \ceil{x}$ and (ii) $\ceil{x} < y$.\label{lem:cx-le-cy-then-x-leq-cx-and-cx-le-y}
\end{clemma}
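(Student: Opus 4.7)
The plan is to reduce both parts to elementary facts about the ceiling function and rely on ACL2s's linear arithmetic to close the argument, spending most of our effort teaching ACL2s those elementary facts in a usable form. Part~(i), $x \leq \ceil{x}$, is simply the defining property of the ceiling and does not actually depend on the hypothesis $\ceil{x} < \ceil{y}$; I would expect it to reduce to a single rewrite rule (or be discharged directly by the arithmetic books).

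For part~(ii), my paper argument is as follows. Since $\ceil{x}$ and $\ceil{y}$ are both integers and $\ceil{x} < \ceil{y}$, we get $\ceil{x} \leq \ceil{y} - 1$. By the minimality property of the ceiling, $\ceil{y} - 1 < y$: if instead $\ceil{y} - 1 \geq y$, then $\ceil{y} - 1$ would be an integer no smaller than $y$ but strictly smaller than $\ceil{y}$, contradicting the definition of $\ceil{y}$. Chaining these inequalities yields $\ceil{x} \leq \ceil{y} - 1 < y$, which is exactly (ii).

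The main obstacle, as the surrounding text foreshadows, will not be the mathematics but the fact that ACL2s's default theory says very little about \codify{ceiling}. Concretely, I anticipate needing small auxiliary rewrite or linear lemmas asserting (a) $\ceil{x}$ is an integer for rational $x$, (b) $x \leq \ceil{x}$, and (c) $\ceil{x} - 1 < x$. Each of these is a one-line fact on paper, but without them as rewrite rules ACL2s's linear arithmetic cannot see past the opaque \codify{ceiling} symbol. Once those are established and enabled, I would expect the proof of the lemma itself to be a short \codify{:use}/\codify{:in-theory} hint that lets the arithmetic decision procedure combine (a)--(c) with the hypothesis $\ceil{x} < \ceil{y}$ to conclude both conjuncts in one shot.
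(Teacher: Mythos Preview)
Your proposal is correct, and it matches the paper's treatment: the paper presents this lemma without an explicit proof, listing it among the basic ceiling facts that ACL2s discharges once the right elementary properties are available (the first lemma given a manual argument is the one immediately following). Your plan to supply (a) integrality of $\ceil{x}$, (b) $x \leq \ceil{x}$, and (c) $\ceil{x}-1 < x$ as linear lemmas and then let arithmetic close the chain $\ceil{x}\leq\ceil{y}-1<y$ is exactly the kind of setup the paper alludes to when it says ACL2s ``does not by default know very much about the ceiling function.''
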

For the next Lemma we write a manual proof, adapted from~\cite{233684}, and utilizing Lemma~\ref{lem:cx-le-cy-then-x-leq-cx-and-cx-le-y}, which we give immediately below.
\begin{clemma}
Let $m, n \in \mathbb{N}_+$ and $x \in \mathbb{Q}_+$.
Then $\ceil{x/mn} = \ceil{\ceil{x/m}/n}$.\label{lem:cxmn-ceil-ceilxm-n}
\end{clemma}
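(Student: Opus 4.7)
The plan is to prove the equality by establishing both inequalities $\ceil{x/(mn)} \leq \ceil{\ceil{x/m}/n}$ and $\ceil{\ceil{x/m}/n} \leq \ceil{x/(mn)}$ separately. Both directions rely on only two basic properties of the ceiling function: $y \leq \ceil{y}$ for every rational $y$, and the ``integer-minimality'' property that whenever an integer $k$ satisfies $k \geq y$, in fact $k \geq \ceil{y}$. The latter is essentially clause (ii) of Lemma~\ref{lem:cx-le-cy-then-x-leq-cx-and-cx-le-y} used contrapositively, and it will be the workhorse in the harder direction.

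For the direction $\ceil{x/(mn)} \leq \ceil{\ceil{x/m}/n}$, I would start from $x/m \leq \ceil{x/m}$, divide both sides by the positive integer $n$ to obtain $x/(mn) \leq \ceil{x/m}/n$, and then conclude by monotonicity of the ceiling function. This direction should go through with only routine arithmetic, provided a monotonicity rewrite for $\ceil{\cdot}$ is available.

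For the reverse direction $\ceil{\ceil{x/m}/n} \leq \ceil{x/(mn)}$, I would let $q = \ceil{x/(mn)}$, so that $q$ is a non-negative integer with $q \geq x/(mn)$. Multiplying through by $n$ gives $qn \geq x/m$, and because $qn$ is itself a non-negative integer lying above $x/m$, the integer-minimality property yields $qn \geq \ceil{x/m}$. Dividing by $n$ gives $q \geq \ceil{x/m}/n$, and a second application of integer-minimality (using that $q$ is an integer) then yields $q \geq \ceil{\ceil{x/m}/n}$, which is the required inequality.

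The main obstacle in ACL2s is not the mathematical content but the mechanization of the integer-minimality step: the rewriter does not automatically know that $qn \geq x/m$ combined with $qn$ being a natural forces $qn \geq \ceil{x/m}$, so this step cannot be left implicit. I expect Lemma~\ref{lem:cx-le-cy-then-x-leq-cx-and-cx-le-y} to be invoked here in contrapositive form: assuming toward contradiction that an integer $k$ satisfies $k \geq y$ yet $k < \ceil{y}$, clause (ii) (with $k$ in the role of $x$, noting $\ceil{k} = k$) produces the gap $k < y$ needed to contradict the hypothesis. I therefore expect the formal script to be substantially longer than the pen-and-paper proof above, with most of the effort spent threading this contradictions-via-C1 pattern through each of the two ceiling-minimality applications and discharging the guard obligations that $n$, $m$, and the various numerators and denominators have the right sign.
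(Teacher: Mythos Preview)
Your proposal is correct and follows essentially the same decomposition as the paper: the easy inequality $\ceil{x/(mn)} \leq \ceil{\ceil{x/m}/n}$ via monotonicity, and the reverse inequality via the integer-minimality property of the ceiling, with Lemma~\ref{lem:cx-le-cy-then-x-leq-cx-and-cx-le-y} supplying that property. The only stylistic difference is that the paper frames the hard direction as a proof by contradiction---assuming $\ceil{x/(mn)} < \ceil{\ceil{x/m}/n}$, applying Lemma~\ref{lem:cx-le-cy-then-x-leq-cx-and-cx-le-y} in its stated form to get $\ceil{x/(mn)} < \ceil{x/m}/n$, then multiplying by $n$ to sandwich the integer $n\ceil{x/(mn)}$ strictly between $x/m$ and $\ceil{x/m}$---whereas you argue directly using the contrapositive of clause~(ii); the underlying arithmetic and the invocation of the key lemma are the same.
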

\begin{proof}
Observe:
\begin{equation}
\ceil{x/m} - 1 < x/m \leq \ceil{x/m}
\label{eqn:cxm-1-le-xm-leq-cxm}
\end{equation}
Dividing Eqn.~\ref{eqn:cxm-1-le-xm-leq-cxm} by $n$, we get 
\(
(\ceil{x/m} - 1)/n < x/mn \leq \ceil{x/m}/n
\).  We also observe that
\(
\ceil{x/mn} \leq \ceil{\ceil{x/m}/n}
\).
Thus by Lemma~\ref{lem:cx-le-cy-then-x-leq-cx-and-cx-le-y}:
\begin{equation}
x/mn \leq \ceil{x/mn} < \ceil{x/m}/n
\label{eqn:x-mn-leq-cx-mn-le-cxm-n}
\end{equation}
Suppose (for a contradiction) that $\ceil{x/mn} < \ceil{\ceil{x/m}/n}$.  
Multiplying Eqn.~\ref{eqn:x-mn-leq-cx-mn-le-cxm-n} by $n$, we get
\(
x/m \leq n \ceil{x/mn} < \ceil{x/m}
\),
which is sufficient information for ACL2s to automatically find the contradiction:
\begin{equation}
\ceil{x/m} \leq n \ceil{x/mn} < \ceil{x/m}
\end{equation}
sufficing to show that $\ceil{x/mn} \centernot{<} \ceil{\ceil{x/m}/n}$.
But then $\ceil{x/mn} = \ceil{\ceil{x/m}/n}$ and we are done.
\end{proof}
Next we observe that for all $x, y, z \in \mathbb{Q}_+$, if $y \leq z$ then $y/x \leq z/x$ and moreover, $yx \leq zx$.  The following property of the ceiling function automatically follows. 
Next we make an important observation about the ceiling function.
\begin{clemma}
Let $\alpha \in \mathbb{Q}$ such that $0 < \alpha < 1$.
Let $k = \ceil{\alpha/(1-\alpha)}$.  Then $\alpha \leq k/(1+k)$.
\end{clemma}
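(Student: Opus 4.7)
The plan is to unfold the definition of $k$ and then rearrange algebraically. By the defining property of the ceiling function, we have $k \geq \alpha/(1-\alpha)$. Since $0 < \alpha < 1$, the quantity $1-\alpha$ is strictly positive, so multiplying the inequality through by $(1-\alpha)$ preserves its direction and yields $k(1-\alpha) \geq \alpha$, i.e., $k - k\alpha \geq \alpha$. Collecting the $\alpha$ terms on one side gives $k \geq \alpha(1+k)$, and dividing by $1+k > 0$ gives $\alpha \leq k/(1+k)$, which is exactly the desired conclusion.

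On paper this is a three-line calculation; in ACL2s the main obstacle will be convincing the prover of the very first step, namely $\lceil x \rceil \geq x$ for positive rationals $x$. The paper has just remarked that ACL2/ACL2s does not ``know much'' about the ceiling function by default, and the preceding two lemmas in the subsection exist precisely to shore up this gap. So before invoking the algebraic chain above, I would either cite an already-established arithmetic lemma from the books (such as one packaged with \codify{arithmetic-5}) or prove a small helper \codify{(defthm ceil->=)} stating that $x \leq \lceil x \rceil$ on $\mathbb{Q}_+$. Part~(i) of Lemma~\ref{lem:cx-le-cy-then-x-leq-cx-and-cx-le-y} already asserts exactly this under a slightly stronger hypothesis, so in the worst case it can be reused by instantiating with a witness $y$ whose ceiling exceeds $\lceil x \rceil$; a cleaner approach is to extract the pure inequality as a standalone lemma.

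Once that helper is in place, I would state the lemma with the same \codify{posratp} hypotheses and guard annotations used elsewhere in the section, and let arithmetic normalization finish the job. Concretely, the \codify{:hints} would provide (a) the ceiling-lower-bound lemma applied to $\alpha/(1-\alpha)$ and (b) the hypothesis $\alpha < 1$ so the prover can see $1-\alpha > 0$ and thus $1+k > 0$ and perform the two sign-preserving multiplications. I would expect no induction and no appeal to Lemma~\ref{lem:cxmn-ceil-ceilxm-n}; the proof should reduce to linear arithmetic over the rationals as soon as the single non-linear fact about $\lceil \cdot \rceil$ is supplied as a hypothesis. If ACL2s struggles with the division step, a fallback is to prove the equivalent cross-multiplied form $\alpha(1+k) \leq k$ first and then rewrite, avoiding any reasoning about rational division at the top level.
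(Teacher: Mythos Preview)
Your proposal is correct and matches the paper's proof essentially step for step: the paper also starts from $\alpha/(1-\alpha)\leq k$, multiplies through by $1-\alpha$ using a previously stated monotonicity-of-multiplication observation, rearranges to $\alpha(1+k)\leq k$, and then divides by $1+k$ via the same monotonicity fact to reach $\alpha\leq k/(1+k)$. The only cosmetic difference is that the paper explicitly invokes the helper ``$y\leq z\Rightarrow yx\leq zx$ on $\mathbb{Q}_+$'' at each multiplicative step, which is exactly the auxiliary lemma you anticipated needing.
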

\begin{proof}
First note that $\alpha/(1-\alpha) \leq k$.
Observe that for all $x, y, z \in \mathbb{Q}_+$, if $y \leq z$,
then $yx \leq zx$.  It follows that 
\(\alpha = (\alpha/(1-\alpha))(1-\alpha) \leq k(1-\alpha)\).
Next observe that $\alpha \leq k(1-\alpha) = k - k\alpha$.
Adding $k\alpha$ to each side, we get $\alpha + k\alpha = \alpha(1 + k) \leq k$.
Again consider $x, y, z \in \mathbb{Q}_+$ such that $y \leq z$,
but this time, observe that $yx \leq zx$.
Thus, 
\(\alpha(1+k)/(1+k) = \alpha \leq k/(1+k)\), and we are done.
\end{proof}
Next we prove two lemmas about fractions.
\begin{clemma}
For all $k \in \mathbb{N}_+$ and $\alpha \in \mathbb{Q}_+$,
$k \alpha^k / k = \alpha^k$. 
\label{lem:ka-k-eq-a-k}
\end{clemma}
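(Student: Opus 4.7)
The statement $k\alpha^k/k = \alpha^k$ is an immediate algebraic identity: it is just cancellation of the factor $k$ in numerator and denominator, which is valid precisely because the hypothesis $k \in \mathbb{N}_+$ guarantees $k \neq 0$. The hypothesis $\alpha \in \mathbb{Q}_+$ is not strictly needed for the identity itself (only that $\alpha^k$ be a well-defined rational), but it is natural to include for uniformity with the surrounding ceiling proof, where $\alpha \in (0,1)$ everywhere and divisions of the form $k\alpha^k / n$ are manipulated as nonnegative rationals.

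My plan is therefore to dispatch this lemma by appealing directly to the built-in arithmetic capabilities of ACL2s. Concretely, I would state it as a \codify{property} with type signature \codify{(k :pos-nat a :pos-rational)}, which automatically makes available the side condition $k \geq 1$ (and hence $k \neq 0$) needed for cancellation. With that hypothesis visible to the rewriter, the standard ACL2s arithmetic theory should rewrite $k \alpha^k / k$ to $\alpha^k$ without further hints, since the underlying rewrite rule is essentially $c \cdot x / c = x$ when $c \neq 0$.

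If the default tactics do not close the goal, my fallback is to supply an explicit hint that reassociates $k \alpha^k / k$ as $(k/k) \cdot \alpha^k$ and then uses a one-line auxiliary fact $k/k = 1$ (guarded by $k \neq 0$) to finish. No induction on $k$ or $\alpha^k$ is needed, and no property of \codify{expt} beyond its being a rational-valued function is invoked; the lemma is essentially a named rewrite rule that will be cited later when simplifying expressions such as $k \alpha^k / n$ at the boundary case $n = k$ in the intermediary lemma.

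The main (and only) obstacle is bookkeeping: ensuring that the positivity of $k$ is propagated as a \codify{natp}/\codify{posp} hypothesis so that the rewriter knows the division is not by zero. This is handled automatically by the type-declaration machinery in ACL2s, so I expect the proof itself to be essentially a one-liner.
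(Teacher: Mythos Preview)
Your proposal is correct and matches the paper's treatment: the paper simply lists this as one of two ``lemmas about fractions'' with no accompanying proof, indicating it is discharged automatically by ACL2s's arithmetic. Your analysis that it is pure cancellation guarded by $k \neq 0$, and that the type declaration \codify{k :pos} (or similar) is what makes this visible to the rewriter, is exactly right.
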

\begin{clemma}
For all $k \leq n \in \mathbb{N}, k/(1+k) \leq n/(1+n)$.
\label{lem:div-lem-ind-step}
\end{clemma}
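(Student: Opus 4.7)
The plan is to reduce the inequality $k/(1+k) \leq n/(1+n)$ to the hypothesis $k \leq n$ by clearing denominators. Since both $1+k$ and $1+n$ are positive naturals (being at least $1$), multiplying by the positive quantity $(1+k)(1+n)$ preserves the direction of the inequality, so the claim is equivalent to $k(1+n) \leq n(1+k)$. Expanding both sides yields $k + kn \leq n + nk$, and since $kn = nk$, subtracting $kn$ from each side leaves exactly $k \leq n$, which is given.

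In ACL2s, I would first state and verify the equivalence as a chain: begin by noting $(1+k) > 0$ and $(1+n) > 0$ (both follow from $k, n \in \mathbb{N}$), then invoke the standard rule that for positive $a, b$ and any $x, y$, $x/a \leq y/b \iff x b \leq y a$. The arithmetic book should dispatch the ring identity $k(1+n) = n(1+k)$ when $k=n$ and more generally the implication $k \leq n \Rightarrow k + kn \leq n + kn$ automatically, so the whole lemma should go through with minimal intervention, perhaps with a \textbf{:use} hint that introduces the cross-multiplied form, or by asserting the intermediate $k(1+n) \leq n(1+k)$ explicitly and letting ACL2s finish.

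The main obstacle I anticipate is not the mathematical content, which is trivial, but rather ensuring that ACL2s's rewriter handles the division without getting stuck. Division-heavy goals sometimes need coercion into multiplicative form by hand; if the default arithmetic books do not close the goal, I would add an explicit lemma of the form \emph{if $a, b$ are positive rationals and $x b \leq y a$ then $x/a \leq y/b$}, and instantiate it with $x = k$, $a = 1+k$, $y = n$, $b = 1+n$. This reduces the problem to a pure polynomial inequality over the naturals, which the arithmetic decision procedure handles directly from the hypothesis $k \leq n$.
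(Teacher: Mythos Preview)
Your argument is correct: cross-multiplying by the positive quantity $(1+k)(1+n)$ reduces the claim to $k(1+n)\leq n(1+k)$, i.e.\ $k\leq n$. The paper does not spell out a proof for this lemma at all; it is listed among the ``lemmas about fractions'' that follow from the earlier arithmetic observations (in particular that $y\leq z$ implies $y/x\leq z/x$ and $yx\leq zx$ for positive rationals), and in ACL2s it evidently goes through without a manual proof script. So your cross-multiplication reasoning is exactly the underlying justification, and your anticipated fallback of explicitly introducing the multiplicative form is sensible but likely unnecessary here.
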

Finally, we make some obvious arithmetic observations, leading to the following result. 
\begin{clemma}
For all $x, y \in \mathbb{Q}_+$, we have $x/\ceil{x/y} \leq y.$
\label{lem:inequality-over-ceil-frac}
\end{clemma}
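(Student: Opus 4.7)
The plan is to leverage the definitional property of the ceiling function together with the inverse monotonicity of positive division. On paper, the result is nearly immediate: by definition of the ceiling we have $x/y \leq \ceil{x/y}$, and since $x, y \in \mathbb{Q}_+$, both sides are strictly positive (indeed $\ceil{x/y} \in \mathbb{N}_+$). Applying the fact that $0 < a \leq b$ implies $1/b \leq 1/a$ to $0 < x/y \leq \ceil{x/y}$ yields $1/\ceil{x/y} \leq y/x$, and multiplying through by $x > 0$ gives $x/\ceil{x/y} \leq y$, as desired.

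For the ACL2s formalization, I would first state the result with the appropriate guards, namely that $x$ and $y$ are positive rationals. Then I would introduce (or appeal to) a rewrite rule capturing the definitional property of the ceiling on positive rationals, i.e., $x/y \leq \ceil{x/y}$ and $\ceil{x/y} \geq 1$. Next, I would invoke the already-made observation preceding this lemma --- that for $x, y, z \in \mathbb{Q}_+$ with $y \leq z$ we have $y/x \leq z/x$ and $yx \leq zx$ --- to obtain the inverse-monotonicity consequence needed in the reciprocal step. With these two facts available as rewrite rules, the proof should close either immediately via \codify{arith-5} or with a short hint instantiating the division inverse at $a = x/y$ and $b = \ceil{x/y}$.

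The main obstacle is the usual one in ACL2s ceiling reasoning: the built-in library carries relatively few rewrite rules about \codify{ceiling}, so the step of moving from $\ceil{x/y} \geq x/y$ to $x/\ceil{x/y} \leq y$ may not go through automatically. In particular, ACL2s tends to stumble when the denominator of a division is itself the result of a ceiling, because it cannot immediately see that the denominator is positive and no less than $x/y$. The fix is to package these as a pair of explicit hypotheses (positivity of $\ceil{x/y}$, and $\ceil{x/y} \geq x/y$) in a \codify{:use} hint, after which the final algebraic manipulation is a routine rational arithmetic fact of the form already dispatched earlier in this subsection.
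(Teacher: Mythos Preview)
Your proposal is correct and follows essentially the same route as the paper: observe $x/y \leq \ceil{x/y}$, invert to get $1/\ceil{x/y} \leq 1/(x/y)$, and multiply through by $x$. Your additional remarks about the ACL2s-side obstacles (positivity of the ceiling and the need for explicit hints) are accurate and in the same spirit as the paper's preceding arithmetic observations.
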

\begin{proof}
Note $x/y \leq \ceil{x/y}$,
thus $1/\ceil{x/y} \leq 1/(x/y)$.
Multiply both sides by $x$, and we are done.
\end{proof}

\subsubsection{Inductive Proof of Intermediary Lemma}\label{subsubsec:ceil-int}
With these arithmetic lemmas completed we can move on to the actual proof.
For convenience, we will define a parameterized function 
$f_\alpha : \mathbb{N}_+ \to \mathbb{Q}_+$ 
such that $f_\alpha(n) = k \alpha^k / n$ for
$k = \ceil{\alpha/(1-\alpha)}$.
As an intermediary lemma, we claim that for all $n \geq k$, $a^n \leq f(n)$.
Assuming the lemma holds, we can 
let $\delta = \ceil{ka^k/\epsilon}$,
and we immediately get that for all 
$n \geq \delta$, $a^n \leq f_\alpha(n) \leq \epsilon$.
Theorem~\ref{thm:a-n-0} immediately follows.
This sub-subsection is spent proving the intermediary lemma.

\begin{clemma}[Base Case]
Let $\alpha \in \mathbb{Q}_+$ and let $k = \ceil{\alpha/(1-\alpha)}$.
Then $f_\alpha(k) = a^k$.
\label{lem:base-case-ken}
\end{clemma}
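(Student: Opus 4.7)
The plan is to unfold the definition of $f_\alpha$ at $n = k$ and then apply a previously-established arithmetic lemma to collapse the $k$'s. Since $f_\alpha(n) = k \alpha^k / n$ with $k = \ceil{\alpha/(1-\alpha)}$, substituting $n = k$ gives $f_\alpha(k) = k \alpha^k / k$, and Lemma~\ref{lem:ka-k-eq-a-k} states exactly that $k \alpha^k / k = \alpha^k$ for $k \in \mathbb{N}_+$ and $\alpha \in \mathbb{Q}_+$. Chaining these two equalities yields the desired $f_\alpha(k) = \alpha^k$.

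The only real content of the proof, beyond unfolding, is ensuring the side condition of Lemma~\ref{lem:ka-k-eq-a-k}, namely $k \in \mathbb{N}_+$, i.e.\ $k \geq 1$. This is where the implicit hypothesis that $\alpha \in (0,1)$ (inherited from the surrounding argument) matters: because $0 < \alpha < 1$ we have $1 - \alpha > 0$ and $\alpha/(1-\alpha) > 0$, so its ceiling is a positive natural. In ACL2s this would be discharged by noting the type of $\alpha$, the positivity of $1-\alpha$, and the standard fact that $\ceil{x} \geq 1$ when $x > 0$.

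The main obstacle I anticipate is not mathematical but bookkeeping: one has to be sure that the definition of $f_\alpha$ actually unfolds in the proof context (so that the $k$ in the numerator is literally the same $k$ as the argument, not a fresh let-binding), and that ACL2s uses Lemma~\ref{lem:ka-k-eq-a-k} as a rewrite rule rather than re-deriving cancellation from scratch. Once the hypothesis $k \geq 1$ is in context and $f_\alpha$ is opened, the goal should close by a single rewrite. In short, I expect this base case to be the shortest lemma in the section, essentially a one-step consequence of Lemma~\ref{lem:ka-k-eq-a-k}.
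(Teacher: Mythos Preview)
Your proposal is correct and matches the paper's approach exactly: the paper's proof is the single line ``Follows directly from Lemma~\ref{lem:ka-k-eq-a-k},'' which is precisely the unfold-then-rewrite step you describe. Your additional discussion of the $k \geq 1$ side condition and the ACL2s bookkeeping is accurate elaboration, but the core argument is identical.
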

\begin{proof}
Follows directly from Lemma~\ref{lem:ka-k-eq-a-k}.
\end{proof}

Before the inductive step, we need one more helper lemma,
the proof of which follows from our prior arithmetic observations.
\begin{clemma}
For all $n, k \in \mathbb{N}_+$ and $\alpha \in \mathbb{Q}_+$,
if $\alpha^{n+1} \leq \alpha k \alpha^k / n$, then
$\alpha^{n+1} \leq k a^k / (1 + n)$.
\label{lem:incr-denom-lemma}
\end{clemma}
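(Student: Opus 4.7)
The plan is to treat this lemma as a one-step transitivity argument layered on a single nontrivial arithmetic fact. Starting from the hypothesis $\alpha^{n+1} \leq \alpha k \alpha^k / n$, it suffices to establish the auxiliary inequality $\alpha k \alpha^k / n \leq k \alpha^k / (n+1)$, after which transitivity of $\leq$ discharges the conclusion. Cancelling the positive factor $k \alpha^k$ from both sides of the auxiliary inequality reduces it to $\alpha / n \leq 1/(n+1)$, i.e.\ $\alpha \leq n/(n+1)$. So the real content of the lemma is precisely that bound on $\alpha$.

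That bound is not written in the hypothesis, but it is available from the surrounding inductive setup in which $k = \ceil{\alpha/(1-\alpha)}$ and $k \leq n$: the earlier ceiling lemma gives $\alpha \leq k/(k+1)$, and Lemma~\ref{lem:div-lem-ind-step} upgrades this via $k \leq n$ to $k/(1+k) \leq n/(1+n)$, so chaining yields $\alpha \leq n/(n+1)$. Concretely I would proceed in three short steps: first invoke the ceiling-bound lemma and $k \leq n$ together with Lemma~\ref{lem:div-lem-ind-step} to obtain $\alpha \leq n/(n+1)$; next multiply through by the positive rational $k \alpha^k / n$ (using the ``$y \leq z \implies yx \leq zx$ for positive $x$'' observation from $\mathsection$\ref{subsubsec:ceil-arith}) to get $\alpha \cdot k \alpha^k / n \leq k \alpha^k / (n+1)$; finally chain with the hypothesis.

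The main obstacle is proof engineering rather than mathematics. ACL2s will not cross-multiply across denominators by itself, so the monotonicity step must be phrased as an explicit instantiation of the earlier positive-scaling lemma, and some care is needed so that the prover recognizes $\alpha \cdot k \alpha^k / n$ and $k \alpha^{k+1}/n$ as the same term (a trivial commutativity/associativity rewrite on the way in). Once those rewrites are in place, each of the three steps is a routine one-liner and the lemma closes by transitivity.
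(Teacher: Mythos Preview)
Your reduction is exactly right: the only nontrivial content is the auxiliary inequality $\alpha k\alpha^k/n \leq k\alpha^k/(n+1)$, which after cancelling the positive factor $k\alpha^k$ is equivalent to $\alpha \leq n/(n+1)$, and you correctly flag that this bound is not among the stated hypotheses. That observation actually exposes a defect in the paper's English rendering of the lemma rather than in your reasoning: as literally written the statement is false (take $\alpha = 2$, $n = k = 1$; then $\alpha^{n+1} = 4 = \alpha k\alpha^k/n$, so the premise holds, but $k\alpha^k/(n+1) = 1 < 4$). The underlying ACL2s property must therefore carry additional hypotheses --- presumably $\alpha < 1$ together with $k = \ceil{\alpha/(1-\alpha)}$ and $k \leq n$, or at least enough to force $\alpha \leq n/(n+1)$ --- which the prose elided.

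Consequently your plan to ``borrow'' $k = \ceil{\alpha/(1-\alpha)}$ and $k\le n$ from the surrounding inductive setup is not something you can actually do inside the proof of a standalone ACL2s lemma; those facts must appear among the lemma's own hypotheses. Once they do, your three-step plan (chain the earlier ceiling bound $\alpha \leq k/(1+k)$ with Lemma~\ref{lem:div-lem-ind-step} to get $\alpha \leq n/(n+1)$; scale by the positive quantity $k\alpha^k/n$ via the $y\leq z \Rightarrow yx\leq zx$ observation from \S\ref{subsubsec:ceil-arith}; then apply transitivity with the premise) is precisely what the paper's one-line ``follows from our prior arithmetic observations'' is gesturing at, and it will go through.
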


\begin{clemma}[Inductive Step]
Let $\alpha \in \mathbb{Q}_+$ and $n \in \mathbb{N}$.  Suppose that $\alpha < 1$,
$k = \ceil{\alpha/(1-\alpha)} \leq n$, and $\alpha^n \leq f_\alpha(n)$.
Then $\alpha^{1+n} \leq f_\alpha(1+n)$.\label{lem:inductive-step-kens-proof}
\end{clemma}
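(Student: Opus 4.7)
The plan is to imitate the inductive step from the prose proof of Theorem~\ref{thm:a-n-0} directly, using the arithmetic lemmas already established in~$\mathsection$\ref{subsubsec:ceil-arith}, and then to rely on Lemma~\ref{lem:incr-denom-lemma} to discharge the final denominator manipulation. Expanding $f_\alpha$, the goal reduces to showing that $\alpha^{1+n} \leq k\alpha^k/(1+n)$, given the inductive hypothesis $\alpha^n \leq k\alpha^k/n$, the bound $k \leq n$, and $0 < \alpha < 1$.

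First I would multiply the inductive hypothesis through by $\alpha$, which is positive, to obtain the intermediate inequality $\alpha^{n+1} \leq \alpha \cdot k\alpha^k / n$. Next I would establish the key chain $\alpha \leq k/(1+k) \leq n/(1+n)$: the left inequality is exactly the earlier \emph{clemma} stating that $\alpha \leq k/(1+k)$ whenever $k = \lceil\alpha/(1-\alpha)\rceil$, and the right inequality is Lemma~\ref{lem:div-lem-ind-step} applied to the hypothesis $k \leq n$. Together these give $\alpha \leq n/(1+n)$, which is precisely the hypothesis form required to invoke Lemma~\ref{lem:incr-denom-lemma}, yielding $\alpha^{n+1} \leq k\alpha^k/(1+n) = f_\alpha(1+n)$ as desired.

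The main obstacle I expect is not mathematical but mechanical: ACL2s will need enough enabled arithmetic rules and positivity side-conditions (that $\alpha > 0$, $n > 0$, $1+n > 0$, $k > 0$, and $k\alpha^k > 0$) to carry out the cross-multiplication and denominator rewrites without ballooning the goal. In particular, the hop from $\alpha \cdot k\alpha^k/n \leq k\alpha^k/(1+n)$ is an ``obvious'' fraction manipulation that often needs either a well-placed rewrite rule or explicit \codify{:use} hints pointing at Lemma~\ref{lem:incr-denom-lemma}, because ACL2 will not spontaneously rearrange products and quotients in this form. I would therefore supply \codify{instructions} that (i)~promote the hypotheses, (ii)~instantiate the earlier $\alpha \leq k/(1+k)$ lemma and Lemma~\ref{lem:div-lem-ind-step} to derive $\alpha \leq n/(1+n)$, (iii)~multiply the inductive hypothesis by $\alpha$, and (iv)~invoke Lemma~\ref{lem:incr-denom-lemma} to finish.

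One subtle corner is the possibility that $n = 0$. Since the hypothesis $k \leq n$ together with $k \geq 1$ (because $\alpha/(1-\alpha) > 0$) forces $n \geq 1$, this case cannot arise; I would make this explicit in the proof script via a quick type/positivity observation to keep ACL2s from getting stuck on division by $n$.
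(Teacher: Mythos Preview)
Your proposal is correct and follows essentially the same route as the paper: multiply the inductive hypothesis by $\alpha$, chain Lemma~C3 with Lemma~\ref{lem:div-lem-ind-step} to obtain $\alpha \leq n/(1+n)$, and then discharge the denominator step via Lemma~\ref{lem:incr-denom-lemma}. The only cosmetic difference is that the paper's proof additionally cites Lemma~\ref{lem:base-case-ken} (to unfold $f_\alpha(k)=\alpha^k$), which you subsume by working directly with $k\alpha^k/n$; your added remarks on positivity side-conditions and the $n=0$ corner are sound and do not diverge from the paper's argument.
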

\begin{proof}
By Lemma~\ref{lem:base-case-ken}, $f_\alpha(k) = a^k$.
By Lemma~\ref{lem:div-lem-ind-step},
\(\alpha \leq k/(1+k) \leq n/(1+n)\).
Then by our arithmetic observations,
$\alpha^{n+1} \leq f_\alpha(k) \alpha$, and thus,
$\alpha^{n+1} \leq k a^k / (1 + n) = f_\alpha(1+n)$.
\end{proof}

Although at this point we've laid out our inductive argument, we still need to
implement it in ACL2s.  To begin with, this means defining an inductive scheme.
\begin{lstlisting}
(definec ikn (a :pos-rational n :nat) :nat
  :ic (< a 1)
  (if (> (ceiling (/ a (- 1 a)) 1) n) 0 (1+ (ikn a (- n 1)))))
\end{lstlisting}
We use this scheme in the proof of the next lemma.
\begin{clemma}[Intermediary Lemma]
Let $\alpha < 1$ be in $\mathbb{Q}_+$ and $n \geq \ceil{\alpha/(1-\alpha)}$ in $\mathbb{N}$.
Then $\alpha^n \leq f_\alpha(n)$.
\label{lem:n-geq-k-implies-a-to-n-leq-fn}
\end{clemma}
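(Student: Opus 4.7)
The plan is to prove the Intermediary Lemma by induction on $n$, using the custom scheme encoded by the function \codify{ikn}. The scheme is specifically designed to perform backward induction from $n$ down to $k = \ceil{\alpha/(1-\alpha)}$: the recursion bottoms out when \codify{(ceiling (/ a (- 1 a)) 1)} exceeds $n$ (i.e.\ when $n < k$), and otherwise peels off one unit from $n$. Discharging the lemma therefore reduces to three cases mirroring this scheme.

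First I would handle the trivial case $n < k$: here the hypothesis $n \geq \ceil{\alpha/(1-\alpha)}$ is false, so the implication holds vacuously, which is exactly the base case of \codify{ikn}. Second, in the case $n = k$, Lemma~\ref{lem:base-case-ken} tells us $f_\alpha(k) = \alpha^k$, and the inequality $\alpha^n \leq f_\alpha(n)$ becomes the equality $\alpha^k \leq \alpha^k$. Third, when $n > k$, I would appeal to the inductive hypothesis applied at $n - 1$ (which still satisfies $n - 1 \geq k$) to conclude $\alpha^{n-1} \leq f_\alpha(n-1)$, then invoke Lemma~\ref{lem:inductive-step-kens-proof} with parameter $n - 1$ to lift this to $\alpha^n \leq f_\alpha(n)$.

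On paper the induction is entirely mechanical, but the main obstacle will be convincing ACL2s to actually use the \codify{ikn} scheme rather than the default natural-number induction (which begins at $0$ and would leave us with no way to match the hypothesis $n \geq k$ in the base case). I expect to need an explicit \codify{:induct} hint pointing at \codify{(ikn a n)}, together with whatever \codify{:in-theory} adjustments are required so that the case split on \codify{(> (ceiling (/ a (- 1 a)) 1) n)} inside \codify{ikn} actually fires during the proof. A secondary concern is ensuring the hypotheses of Lemma~\ref{lem:inductive-step-kens-proof}, particularly $k \leq n - 1$ in the non-base case, are visible to the prover; separating the $n = k$ and $n > k$ subcases inside the inductive step (rather than folding them together) should make this straightforward.

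Once the scheme is accepted, nothing deeper than the two helper lemmas and basic arithmetic should be needed, so the bulk of the effort will go into setting up the induction hints correctly rather than into any new mathematical content.
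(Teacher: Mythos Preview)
Your proposal is correct and matches the paper's approach exactly: the paper's proof is simply ``Induct on \codify{ikn}; use Lemma~\ref{lem:base-case-ken} for the base case and Lemma~\ref{lem:inductive-step-kens-proof} for the inductive step.'' Your more detailed case analysis (vacuous when $n<k$, equality at $n=k$, and the step via the inductive hypothesis when $n>k$) and your discussion of the \codify{:induct} hint are just an unpacking of that one-line proof.
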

\begin{proof}
Induct on \codify{ikn}. Use Lemma~\ref{lem:base-case-ken} for the base case and Lemma~\ref{lem:inductive-step-kens-proof} for the inductive step.
\end{proof}

\subsubsection{Proof of Theorem~\ref{thm:a-n-0}}\label{subsubsec:ceil-a-n-0}
Our proof strategy is as follows.
First, we introduce a function $\delta_\alpha : \mathbb{R}_+ \to \mathbb{N}_+$
defined by $\epsilon \mapsto \max\{ k, d \}$, where $k = \ceil{\alpha/(1-\alpha)}$
as before, and $d = \ceil{k\alpha^k/\epsilon}$.
Then we prove three lemmas about this function (given immediately below) which together 
suffice to imply Theorem~\ref{thm:a-n-0}.  We use $k$ and $d$ as defined above.
\begin{clemma}
Let $\alpha < 1$ and $\epsilon$ be in $\mathbb{Q}_+$ and $n \in \mathbb{N}$.
Suppose $\delta_\alpha(\epsilon) \leq n$.
Then $k \leq n$.
\label{lem:n-geq-d-implies-n-geq-k}
\end{clemma}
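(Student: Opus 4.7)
The plan is to unfold the definition of $\delta_\alpha$ and invoke transitivity of $\leq$. By construction $\delta_\alpha(\epsilon) = \max\{k, d\}$ where $k = \ceil{\alpha/(1-\alpha)}$ and $d = \ceil{k\alpha^k/\epsilon}$, so the monotonicity of $\max$ with respect to either of its arguments immediately gives $k \leq \max\{k, d\} = \delta_\alpha(\epsilon)$. Combining this with the hypothesis $\delta_\alpha(\epsilon) \leq n$ by transitivity yields $k \leq n$.

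In ACL2s, I expect this to go through essentially automatically once we expand $\delta_\alpha$ and the built-in \codify{max}. The only care needed is on the type side: since $k$ and $d$ are defined via \codify{ceiling}, we must ensure the type guards give us that both are naturals (using the hypotheses $\alpha \in \mathbb{Q}_+$, $\alpha < 1$, and $\epsilon \in \mathbb{Q}_+$) so that \codify{max} and the arithmetic decision procedure dispatch the inequality without further hints. If needed, a one-line arithmetic rewrite rule of the form $k \leq \max\{k, d\}$ can be supplied, but I anticipate no real obstacle: the lemma is essentially a bookkeeping step whose whole purpose is to repackage the $\max$ in $\delta_\alpha$ into the $k \leq n$ hypothesis consumed by Lemma~\ref{lem:n-geq-k-implies-a-to-n-leq-fn} in the main proof of Theorem~\ref{thm:a-n-0}.
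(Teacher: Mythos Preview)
Your proposal is correct and matches the paper's own proof essentially verbatim: unfold $\delta_\alpha(\epsilon) = \max\{k,d\}$, note $k \leq \max\{k,d\}$, and conclude by transitivity with the hypothesis $\delta_\alpha(\epsilon) \leq n$. Your additional remarks about ACL2s type obligations and the lemma's role in feeding Lemma~\ref{lem:n-geq-k-implies-a-to-n-leq-fn} are accurate and appropriate context.
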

\begin{proof}
By definition of $\delta_\alpha$, we have
$\max\{ k, d \} \leq n$.  Since $k \leq \max\{ k, d \}$, we are done.
\end{proof}
\begin{clemma}
Let $\alpha < 1$ and $\epsilon$ be in $\mathbb{Q}_+$ and $n \in \mathbb{N}$.
Suppose $\delta_\alpha(\epsilon) \leq n$.
Then $\alpha^n \leq f_\alpha(n)$.
\label{lem:a-n-to-fan}
\end{clemma}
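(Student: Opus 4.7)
The plan is to chain together the two results that are already in hand: the previous lemma (stating that $\delta_\alpha(\epsilon) \leq n$ forces $k \leq n$) and the Intermediary Lemma~\ref{lem:n-geq-k-implies-a-to-n-leq-fn} (stating that $k \leq n$ forces $\alpha^n \leq f_\alpha(n)$). Composition of these two facts yields the conclusion with no new arithmetic reasoning required.

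Concretely, I would first invoke Lemma~\ref{lem:n-geq-d-implies-n-geq-k} on the hypothesis $\delta_\alpha(\epsilon) \leq n$ to derive the intermediate fact $k \leq n$, where $k = \ceil{\alpha/(1-\alpha)}$. At that point the hypotheses of Lemma~\ref{lem:n-geq-k-implies-a-to-n-leq-fn} are all discharged --- $\alpha \in \mathbb{Q}_+$ with $\alpha < 1$ is given, $n \in \mathbb{N}$ is given, and $n \geq k$ has just been established --- so applying that lemma delivers $\alpha^n \leq f_\alpha(n)$, which is exactly what we want.

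In ACL2s the proof should therefore be a short \codify{:use} hints script that instantiates both prior lemmas with the current $\alpha$, $\epsilon$, and $n$, together with the definition of $\delta_\alpha$ so that the rewriter can unfold $\max\{k,d\}$. The main (minor) obstacle I anticipate is not mathematical but bureaucratic: ensuring that the type guards (\codify{pos-rational} for $\alpha$ and $\epsilon$, \codify{nat} for $n$, and the constraint $\alpha < 1$) propagate cleanly to the instantiations, and that ACL2s sees the arithmetic expressions for $k$ and $d$ in the same normalized form in which they appear in the statements of the two cited lemmas. Once the terms unify, no inductive or case-splitting work should be needed.
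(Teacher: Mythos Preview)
Your proposal is correct and matches the paper's own proof essentially verbatim: the paper states that the result ``follows automatically from Lemmas~\ref{lem:n-geq-k-implies-a-to-n-leq-fn} and~\ref{lem:n-geq-d-implies-n-geq-k} with the definitions of $f_\alpha$ and $\delta_\alpha$.'' Your anticipated bureaucratic issue about unfolding $\delta_\alpha$ so the normalized forms line up is exactly why the paper mentions including those definitions.
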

\begin{proof}
Follows automatically from Lemmas~\ref{lem:n-geq-k-implies-a-to-n-leq-fn} and~\ref{lem:n-geq-d-implies-n-geq-k} with the definitions of $f_\alpha$ and $\delta_\alpha$.
\end{proof}
\begin{clemma}
Let $\alpha < 1$ and $\epsilon$ be in $\mathbb{Q}_+$ and $n \in \mathbb{N}$.
Suppose $\delta_\alpha(\epsilon) \leq n$.
Then $f_\alpha(n) \leq \epsilon$.
\label{lem:fan-to-0}
\end{clemma}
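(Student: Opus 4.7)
The plan is to unpack the definition of $\delta_\alpha(\epsilon) = \max\{k, d\}$, where $k = \ceil{\alpha/(1-\alpha)}$ and $d = \ceil{k\alpha^k/\epsilon}$, and chain together two elementary inequalities to bound $f_\alpha(n) = k\alpha^k/n$ by $\epsilon$. The hypothesis $\delta_\alpha(\epsilon) \leq n$ gives both $k \leq n$ and, more importantly for this lemma, $d \leq n$.

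First I would invoke Lemma~\ref{lem:inequality-over-ceil-frac} with $x := k\alpha^k$ and $y := \epsilon$ to obtain
\[
\frac{k\alpha^k}{\ceil{k\alpha^k/\epsilon}} = \frac{k\alpha^k}{d} \leq \epsilon.
\]
This requires that $k\alpha^k$ and $\epsilon$ both be positive rationals, which follows from $\alpha \in \mathbb{Q}_+$ with $\alpha < 1$ (so $k \in \mathbb{N}_+$ because $\alpha/(1-\alpha) > 0$), from closure of $\mathbb{Q}_+$ under multiplication and powers, and from $\epsilon \in \mathbb{Q}_+$ by hypothesis. Next I would use the monotonicity of division in the denominator: since $0 < d \leq n$ and $k\alpha^k > 0$, we have $k\alpha^k/n \leq k\alpha^k/d$. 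Combining this with the previous inequality gives $f_\alpha(n) = k\alpha^k/n \leq k\alpha^k/d \leq \epsilon$, which is the desired conclusion.

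In ACL2s I would state the proof as a short chain of rewrites, instantiating Lemma~\ref{lem:inequality-over-ceil-frac} explicitly with the right witnesses so the prover does not have to guess them, and providing the positivity side-conditions as hypotheses so the monotonicity-of-division step discharges automatically. The main obstacle will be the usual one with the ceiling function in ACL2s, namely convincing the system that $d$ is a positive natural so that $n \geq d > 0$ gives a legal division; this should be handled by the guards on $\delta_\alpha$ together with the arithmetic observations already established in $\mathsection$\ref{subsubsec:ceil-arith}. No induction is needed here — all the heavy lifting lives in Lemma~\ref{lem:inequality-over-ceil-frac} and in the definition of $\delta_\alpha$ itself.
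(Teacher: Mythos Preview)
Your proposal is correct and matches the paper's own proof essentially step for step: unpack $\delta_\alpha(\epsilon)=\max\{k,d\}$ to get $d\leq n$, use monotonicity of division in the denominator to obtain $k\alpha^k/n \leq k\alpha^k/d$, and invoke Lemma~\ref{lem:inequality-over-ceil-frac} with $x=k\alpha^k$, $y=\epsilon$ to conclude $k\alpha^k/d \leq \epsilon$. The only cosmetic difference is ordering---the paper states the monotonicity step before the ceiling lemma, while you do the reverse---but the argument and the key lemma are identical.
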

\begin{proof}
By the definition of $\delta_\alpha$, we have
\(\max\{ k, d \} \leq n\).
Our prior arithmetic observations give us that $k \alpha^k / n \leq k \alpha^k / d$.
Since $d = \ceil{k\alpha^k/\epsilon}$, 
by Lemma~\ref{lem:inequality-over-ceil-frac},
clearly $k \alpha^k / d \leq \epsilon$.
The result follows.
\end{proof}
Armed with these Lemmas, we can prove a ``helper lemma'' like we did previously.
But in this case, we use $\leq$ instead of $<$ because of the way we structured our
argument based on the intermediary result.
\begin{clemma}
For all $\alpha < 1$ and $\epsilon$ in $\mathbb{Q}_+$ and $n \in \mathbb{N}_+$,
if $\delta_\alpha(\epsilon) \leq n$ then 
$\alpha^n \leq \epsilon$.
\end{clemma}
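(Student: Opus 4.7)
The plan is simply to chain Lemmas~\ref{lem:a-n-to-fan} and~\ref{lem:fan-to-0} via transitivity of $\leq$. Under the shared hypothesis $\delta_\alpha(\epsilon) \leq n$, the former supplies $\alpha^n \leq f_\alpha(n)$ and the latter supplies $f_\alpha(n) \leq \epsilon$, so $\alpha^n \leq \epsilon$ follows immediately. All of the mathematical content has already been discharged by the preceding two lemmas; this statement is essentially their conjunction repackaged for downstream use.

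Concretely, I would promote $\alpha$, $\epsilon$, $n$, and the hypothesis $\delta_\alpha(\epsilon) \leq n$, then supply the two cited lemmas as \codify{:use} hints (or equivalently cite them in \codify{:instructions}). After the two instantiations land as additional hypotheses, ACL2s's built-in arithmetic reasoning should close the goal without further guidance, since it is just a transitivity step on rational inequalities.

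The main obstacle, such as it is, will be guard/type bookkeeping rather than anything genuinely mathematical. We need $n \geq 1$ so that division by $n$ in $f_\alpha(n) = k\alpha^k/n$ is well-defined, and we need $k \geq 1$ so that $\delta_\alpha(\epsilon) = \max\{k,d\}$ is itself in $\mathbb{N}_+$. Both follow from the fact that $\alpha > 0$ forces $\alpha/(1-\alpha) > 0$, hence $k = \ceil{\alpha/(1-\alpha)} \geq 1$, and then $n \geq \delta_\alpha(\epsilon) \geq k \geq 1$. ACL2s may need a short arithmetic hint to stitch this chain together before it will accept the instantiations of the previous lemmas.

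One subtlety worth flagging for the next step: this helper only delivers $\alpha^n \leq \epsilon$, whereas the \codify{lim-0} witness demands the strict inequality $\alpha^n < \epsilon$. Closing that gap will be deferred to the proof of Theorem~\ref{thm:a-n-0} itself, where — mirroring the case split used in the real-analytic proof of $\mathsection$\ref{sec:real} — we can invoke this helper at $\epsilon/2$ instead of $\epsilon$, obtaining $\alpha^n \leq \epsilon/2 < \epsilon$ and finishing the argument.
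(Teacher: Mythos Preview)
Your proposal is correct and matches the paper's own proof essentially verbatim: the paper also derives this lemma by chaining Lemmas~\ref{lem:a-n-to-fan} and~\ref{lem:fan-to-0} after checking that their preconditions hold, and the $\leq$-to-$<$ gap is likewise closed downstream by instantiating with $\epsilon/2$.
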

\begin{proof}
Follows from Lemmas~\ref{lem:a-n-to-fan} and~\ref{lem:fan-to-0} after observing that
all their preconditions are met.
\end{proof}
Finally, we can provide the \codify{instructions} to prove Theorem~\ref{thm:a-n-0}.
Note how we divide $\epsilon$ by 2 in order to transform $\leq$ into $<$, to fit 
Definition~\ref{def:limit}.\footnote{N.b., this trick suffices to show that the alternative definition with $\leq$ is equivalent.}
\begin{lstlisting}
((:use (:instance lim-0-suff (d (delta a (/ e 2)))))
 (:use (:instance a^n->0 (a a) (e (/ e 2)) (n n)))
 :pro :prove)
\end{lstlisting}

\subsection{Binomial Proof}\label{subsec:binom}
In this subsection, we propose an alternative proof.  
The strategy can be split into two steps.
First, we prove that 
$0 \leq \alpha \leq 1/2 \implies \lim_{n \to \infty} \alpha^n = 0$.  
Second, we prove that
for all $\alpha \in [0, 1)$, there exists some $\delta \in \mathbb{N}$ such that
$n > \delta \implies \alpha^n \leq 1/2$.  
We rely on the binomial theorem to find this $\delta$, 
hence the name of the proof.
These results suffice to prove
Theorem~\ref{thm:a-n-0}.

We found two ways to approach the first step.
The first way was to attack the problem directly, with an $\epsilon/\delta$
proof.  The second was to leverage the termination
analysis in ACL2s to find a $\delta$ semi-automatically.
We cover the first approach in 
$\mathsection$\ref{subsubsec:eps-del-half-0}
and the second in 
$\mathsection$\ref{subsubsec:automatic-half-0}.
Then in
$\mathsection$\ref{subsubsec:bin-part-2}
we show how, given either approach, we can prove Thm.~\ref{thm:a-n-0} by completing the ``second step'' described above.

\subsubsection{Manual Proof of $0 \leq \alpha \leq 1/2 \implies \lim_{n \to \infty} \alpha^n = 0$.}
\label{subsubsec:eps-del-half-0}
We begin by importing the proof-by-arithmetic book.
\begin{lstlisting}
(include-book "make-event/proof-by-arith" :dir :system)
\end{lstlisting}
We then prove the a sequence of simple arithmetic facts, using some combination of the 
\codify{linear}, 
\\\codify{match-free}, and \codify{all} rule classes.
First, like we did in the prior section, we show that the exponent is monotonic.
Second, we show that for all $n \in \mathbb{N}$, $n < 2^n$, and thus, if $n$ is positive, then $1/2^n < 1/n$.
Then we introduce an arithmetic trick by which we can extract a number smaller than $\epsilon$,
namely, if $\epsilon = x/y$ is a positive rational, then $1/y < \epsilon$.
Combining these results yields the following two lemmas.
\begin{blemmam}
For all $\alpha \leq 1/2$ in $\mathbb{Q}_+$, and for all $d \in \mathbb{N}_+$,
$\alpha^d \leq 1/2^d$.
\label{lem:alpha-leq-half-impl-alpha-d-leq-half-d}
\end{blemmam}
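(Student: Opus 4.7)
The plan is to reduce this directly to the monotonicity of integer exponentiation with respect to the base. Rewriting $1/2^d$ as $(1/2)^d$, the claim becomes $\alpha^d \leq (1/2)^d$, which follows immediately from the hypothesis $\alpha \leq 1/2$ together with the fact that for all $x, y \in \mathbb{Q}$ with $0 \leq x \leq y$ and all $d \in \mathbb{N}$, $x^d \leq y^d$. Since $\alpha$ is drawn from $\mathbb{Q}_+$, nonnegativity is automatic.

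If the base-monotonicity principle is not already discharged by the preceding arithmetic setup, I would prove it inline by induction on $d$. The base case $d = 1$ is immediate. For the inductive step, assume $\alpha^d \leq 1/2^d$; multiplying both sides by the nonnegative quantity $\alpha$ yields $\alpha^{d+1} \leq \alpha \cdot (1/2^d)$, and since additionally $\alpha \leq 1/2$ with $1/2^d > 0$, we get $\alpha \cdot (1/2^d) \leq (1/2)(1/2^d) = 1/2^{d+1}$. Transitivity of $\leq$ then closes the step.

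The main obstacle I anticipate is not mathematical but one of automation: coaxing ACL2s to recognize \codify{(expt 1/2 d)} and $1/\codify{(expt 2 d)}$ as equal, and to chain two multiplicative inequalities into a single one. Given that the preceding text highlights the use of the proof-by-arithmetic book together with the \codify{linear}, \codify{match-free}, and \codify{all} rule classes, and given that the monotonicity of the integer exponent has been phrased as a reusable lemma earlier in the subsection, I expect BM1 to collapse to essentially a one-line corollary, consisting of an instantiation of that monotonicity lemma at $x = \alpha$ and $y = 1/2$, the identity $(1/2)^d = 1/2^d$ (either proved once as a rewrite rule or discharged by the arithmetic book), and a final \codify{:prove}.
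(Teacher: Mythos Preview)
Your proposal is correct and matches the paper's approach: the text immediately preceding BM1 establishes monotonicity of the exponent in the base and then says BM1 follows by ``combining these results,'' which amounts exactly to your instantiation at $x=\alpha$, $y=1/2$ together with the rewrite $(1/2)^d = 1/2^d$. Your remarks about the automation hurdles (the \codify{linear}/\codify{match-free} rule classes and the $(1/2)^d$ versus $1/2^d$ normalization) are also on point and mirror what the paper alludes to.
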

\begin{blemmam}
For all $\alpha, \epsilon = x/y \in \mathbb{Q}_+$, where $x, y \in \mathbb{N}_+$,
if $\alpha \leq 1/2$,
then $\alpha^y \leq \epsilon$.
\label{lem:a-leq-half-impl-a-denom-eps-leq-eps}
\end{blemmam}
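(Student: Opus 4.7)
The plan is to chain together the arithmetic facts already established. By Lemma~\ref{lem:alpha-leq-half-impl-alpha-d-leq-half-d}, specializing the exponent $d$ to $y$, we get $\alpha^y \leq 1/2^y$. Next, using the fact that $y < 2^y$ for every positive natural $y$, and taking reciprocals (which reverses the inequality on positive quantities), we obtain $1/2^y \leq 1/y$. Finally, since $\epsilon = x/y$ with $x \geq 1$, we have $1/y \leq x/y = \epsilon$. Transitivity of $\leq$ yields $\alpha^y \leq \epsilon$, which is the desired conclusion.

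In ACL2s, I would state the result with the exact hypotheses given (so that the type system supplies $x, y \in \mathbb{N}_+$ and $\alpha \in \mathbb{Q}_+$ with $\alpha \leq 1/2$), and then discharge it by a \codify{:use} hint citing Lemma~\ref{lem:alpha-leq-half-impl-alpha-d-leq-half-d} instantiated at $d = y$, together with the previously proven lemmas ``$n < 2^n$'' and ``if $\epsilon = x/y$ is a positive rational then $1/y \leq \epsilon$.'' Once all three facts are available as rewrite or linear rules, the remaining reasoning is purely linear arithmetic over positive rationals, which the arithmetic book imported at the start of the subsection should dispatch automatically.

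The main obstacle I anticipate is not the mathematics but the ACL2s bookkeeping: the hypothesis ``$\epsilon = x/y$ where $x, y \in \mathbb{N}_+$'' is really a statement about the \emph{existence} of such a representation of $\epsilon$, so I would need to either destructure $\epsilon$ via \codify{numerator}/\codify{denominator} (noting that for $\epsilon \in \mathbb{Q}_+$ both are positive) or introduce $x, y$ as explicit universally quantified parameters and state the property in that form. The latter is cleaner and matches the statement as written. Once the witnesses $x, y$ are explicit, the linear chain above should close the goal without further incident, so this lemma is really a short bridge between Lemma~\ref{lem:alpha-leq-half-impl-alpha-d-leq-half-d} and the main result of the binomial proof in $\mathsection$\ref{subsubsec:bin-part-2}.
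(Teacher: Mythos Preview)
Your proposal is correct and matches the paper's approach exactly: the paper states that Lemma~\ref{lem:a-leq-half-impl-a-denom-eps-leq-eps} follows by ``combining'' the previously established facts that the exponent is monotonic (giving $\alpha^y \leq 1/2^y$, i.e., Lemma~\ref{lem:alpha-leq-half-impl-alpha-d-leq-half-d}), that $n < 2^n$ (giving $1/2^y < 1/y$), and that $1/y \leq x/y = \epsilon$. Your chain $\alpha^y \leq 1/2^y \leq 1/y \leq \epsilon$ is precisely this combination, and your remarks about handling the $x/y$ representation in ACL2s are sensible implementation notes that go slightly beyond what the paper spells out.
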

At this point, the desired result follows directly from 
Lemma~\ref{lem:a-leq-half-impl-a-denom-eps-leq-eps}.

\subsubsection{Semi-Automatic Proof of $0 \leq \alpha \leq 1/2 \implies \lim_{n \to \infty} \alpha^n = 0$.}
\label{subsubsec:automatic-half-0}
In the semi-automatic proof, we begin by defining two functions.
The first function, $\mu : \mathbb{N} \times \mathbb{N} \to \mathbb{N}$, 
is defined by $(b, q) \mapsto b \textit{ if } q < 2^b \textit{ else } \mu(b + 1, q)$,
and can be viewed as a ``helper function'' to the second,
$d : \mathbb{Q}_+ \to \mathbb{N}$,
which is defined by $\epsilon \mapsto \mu(0, \text{denominator}(\epsilon))$.
When we define these two functions, ACL2s automatically proves that they terminate,
meaning that for all possible inputs of $\epsilon \in \mathbb{Q}_+$,
$\mu(0, \epsilon)$ terminates.
We use this with an inverse argument to show $1/2^{d(\epsilon)} \leq \epsilon$,
which we then manipulate to get the desired result.
Because the argument and manipulation require writing down
additional lemmas, we call this proof \emph{semi}-automatic.
Next, we prove three lemmas about these functions.
The first two go through automatically, whereas the third requires a manual proof.
\begin{blemmaa}
For all $b, q \in \mathbb{N}$, $q < 2^{\mu(b, q)}$.
\label{lem:uplog-ups}
\end{blemmaa}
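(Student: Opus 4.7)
The plan is to prove this lemma by structural induction on the recursion of $\mu$ itself. When ACL2s admits the definition of $\mu$, it automatically synthesizes an induction scheme that mirrors the recursive call pattern, so the natural move is to supply a hint such as $(:induct\ (\mu\ b\ q))$ (or simply let ACL2s select this scheme from the occurrence of $\mu$ in the conclusion). The lemma essentially asserts that the terminating guard of $\mu$ holds at the value $\mu$ returns, so the proof almost writes itself once the right induction is in place.

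In the base case we are under the hypothesis $q < 2^b$, so by the definitional equation $\mu(b,q) = b$, and the goal $q < 2^{\mu(b,q)}$ reduces to $q < 2^b$, which is the case hypothesis. In the inductive step we are under $q \geq 2^b$, so the definition gives $\mu(b,q) = \mu(b+1, q)$. The induction hypothesis, specialized to $(b+1, q)$, is precisely $q < 2^{\mu(b+1,q)}$; combining it with the recursive equation closes the goal.

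The main obstacle is not mathematical but procedural: ensuring that ACL2s opens the definition of $\mu$ in both cases and that the guard/negated-guard appear as hypotheses. In practice this may require explicit $(:expand\ (\mu\ b\ q))$ hints and possibly enabling the definitional rule for $\mu$. Once the body is exposed, a small amount of linear arithmetic should finish both cases. Because the statement so closely tracks the definition, I would expect the proof to occupy only a few lines of hints and to be a natural stepping stone for the subsequent lemmas in $\mathsection$\ref{subsubsec:automatic-half-0} that convert $\mu$'s termination behavior into the desired inverse bound $1/2^{d(\epsilon)} \leq \epsilon$.
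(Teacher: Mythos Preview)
Your plan is mathematically sound and is almost certainly what ACL2s does under the hood: induct along the recursion scheme generated by $\mu$, open the definition, and let linear arithmetic close both branches. The paper, however, reports that this lemma (together with Lemma~\ref{lem:belowlow}) ``go[es] through automatically''---no \codify{:induct}, \codify{:expand}, or enabling hints were required. In other words, the paper's ``proof'' is simply to state the property and let ACL2s discharge it, whereas you anticipate having to coax the prover with explicit hints. Your write-up is therefore a faithful gloss on the machine's internal argument, but it oversells the procedural difficulty: the ``main obstacle'' you describe (getting the body of $\mu$ to open and the guard cases to appear) did not materialize in practice.
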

\begin{blemmaa}
For all $q \in \mathbb{N}_+$, $1/2^{\mu(0, q)} < 1/q$.
\label{lem:belowlow}
\end{blemmaa}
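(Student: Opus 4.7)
The plan is to reduce this directly to Lemma~BA\ref{lem:uplog-ups} (the fact that $q < 2^{\mu(b,q)}$ for all $b,q$) specialized at $b=0$, together with the elementary fact that inverting a strict inequality between two positive quantities reverses its direction. Concretely, instantiating Lemma~BA\ref{lem:uplog-ups} with $b = 0$ gives $q < 2^{\mu(0,q)}$. Since we are given $q \in \mathbb{N}_+$ and $2^{\mu(0,q)}$ is a positive natural for any natural exponent, both sides are positive rationals, so dividing $1$ by each side and reversing yields $1/2^{\mu(0,q)} < 1/q$, which is exactly the claim.

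In ACL2s, the first step is essentially free: it is a \codify{:use} hint that instantiates Lemma~BA\ref{lem:uplog-ups} at $b=0$. The second step, inversion of a strict inequality, is where the main obstacle lies. ACL2 rewriters are notoriously skittish around rewriting under division, so I would expect the raw \codify{:prove} not to close the goal even with the instantiation in place. My plan is therefore to introduce one short helper lemma of the shape: for all $x, y \in \mathbb{Q}_+$, if $x < y$ then $1/y < 1/x$, and mark it with appropriate rule classes (e.g.\ \codify{linear}, perhaps with a \codify{match-free :all} hint as was done in the manual half-proof above) so that ACL2s fires it when it sees the hypothesis $q < 2^{\mu(0,q)}$ and the goal $1/2^{\mu(0,q)} < 1/q$.

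The remaining bookkeeping is to make sure the type hypotheses carry through: we need $q$ to be a positive rational (given: $q \in \mathbb{N}_+$) and $2^{\mu(0,q)}$ to be a positive rational. The latter follows from $\mu(0, q) \in \mathbb{N}$, which is immediate from the return-type contract on $\mu$ that ACL2s establishes automatically when the function is admitted. If the positivity of $2^{\mu(0,q)}$ is not discharged automatically, a one-line arithmetic lemma that $2^n > 0$ for $n \in \mathbb{N}$ can be added and enabled.

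The hardest part is purely cosmetic: coaxing ACL2s to apply the reciprocal-reverses-inequality rule at the right moment. Once that fires, the proof of Lemma~BA\ref{lem:belowlow} is a one-step consequence of Lemma~BA\ref{lem:uplog-ups}, with no induction or case analysis needed.
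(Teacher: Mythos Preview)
Your mathematical plan is exactly right: instantiate Lemma~\ref{lem:uplog-ups} at $b=0$ to get $q < 2^{\mu(0,q)}$, then invert both sides. The paper, however, reports that this lemma goes through \emph{automatically} in ACL2s---no \codify{:use} hint, no auxiliary reciprocal-reversal lemma, no positivity side-lemma. Presumably once Lemma~\ref{lem:uplog-ups} is in the database (likely as a linear rule), ACL2s's arithmetic decision procedures discharge the inversion on their own. So your decomposition is correct but you have overestimated the friction; the extra scaffolding you propose is unnecessary here, though it would certainly work if the automatic route had failed.
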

\begin{blemmaa}
For all $\epsilon > 0$, $1/2^{d(\epsilon)} < \epsilon$.
\label{belowlog-to-zero}
\end{blemmaa}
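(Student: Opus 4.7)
The plan is to reduce this lemma to the two preceding semi-automatic facts (BA1 and BA2) by exploiting the rational structure of $\epsilon$. Writing $x = \text{num}(\epsilon)$ and $y = \text{denom}(\epsilon)$, the positivity of $\epsilon \in \mathbb{Q}_+$ gives $x, y \in \mathbb{N}_+$, so in particular $x \geq 1$. By the definition of $d$, we have $d(\epsilon) = \mu(0, y)$, which rewrites the goal to the purely $\mu$-level statement $1/2^{\mu(0, y)} < x/y$.

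The key step is to invoke Lemma~\ref{lem:belowlow} with the denominator $y$ (which is a positive natural), yielding $1/2^{\mu(0, y)} < 1/y$. Combining this with the trivial arithmetic inequality $1/y \leq x/y$, which holds because $x \geq 1$ and $y > 0$, we obtain the chain
\[
\frac{1}{2^{d(\epsilon)}} \;=\; \frac{1}{2^{\mu(0, y)}} \;<\; \frac{1}{y} \;\leq\; \frac{x}{y} \;=\; \epsilon.
\]
Transitivity of $<$ over $\leq$ then closes the goal.

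In ACL2s I would execute this by opening the definition of $d$, supplying \codify{(:instance lem-belowlow (q (denominator e)))} as a hint, and letting the linear arithmetic decision procedure combine it with the fact that $\text{num}(\epsilon) \geq 1$ for positive rationals. The main obstacle I anticipate is that ACL2s does not, out of the box, know that the numerator of a positive rational is a positive integer with the right monotonicity consequence (i.e.\ $\text{num}(\epsilon)/\text{denom}(\epsilon) \geq 1/\text{denom}(\epsilon)$); this might require an auxiliary rewrite rule or a \codify{(:use (:instance ...))} hint that instantiates a guard-style fact about \codify{posratp}. Once that bridge is supplied, the remainder should discharge automatically, since everything else is a straightforward arithmetic chain.
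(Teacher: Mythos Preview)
Your proposal is correct and essentially mirrors the paper's own proof: the paper also argues via the two observations $1/\text{denominator}(\epsilon) \leq \epsilon$ (your $1/y \leq x/y$ from $x \geq 1$) and then the instance of Lemma~\ref{lem:belowlow} at $q = \text{denominator}(\epsilon)$, after which the arithmetic chain closes automatically. Your anticipated obstacle about needing ACL2s to see that the numerator of a positive rational is at least~$1$ is exactly the content of the paper's first ``observe'' step.
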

\begin{proof}First observe that $1/\text{denominator}(\epsilon) \leq \epsilon$.
Then observe that $1/2^{\text{denominator}(\epsilon)} < 1/\text{denominator}(\epsilon)$.
The rest follows automatically.
\end{proof}
Next, we establish the monotonicity of the exponent, both strictly ($<$) and otherwise ($\leq$).
This allows us to prove the following.
\begin{blemmaa}For all $k \leq n \in \mathbb{N}_+$ and $\alpha < 1$ in $\mathbb{Q}_+$, $\alpha^n \leq \alpha^k$.\end{blemmaa}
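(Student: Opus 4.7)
The plan is to prove this by induction on $n-k$, leveraging the fact that multiplication by a factor in $(0,1)$ is non-increasing on positive quantities. For the base case $n = k$, we have $\alpha^n = \alpha^k$ so the inequality holds with equality. For the inductive step, suppose $\alpha^n \leq \alpha^k$; we want $\alpha^{n+1} \leq \alpha^k$. Since $\alpha > 0$, multiplying both sides of the inductive hypothesis by $\alpha$ gives $\alpha^{n+1} \leq \alpha \cdot \alpha^k = \alpha^{k+1}$. Because $\alpha < 1$ and $\alpha^k > 0$, we have $\alpha^{k+1} = \alpha \cdot \alpha^k \leq \alpha^k$, and chaining yields $\alpha^{n+1} \leq \alpha^k$.

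To carry this out in ACL2s, I would follow the pattern already used for \codify{ikn} in $\mathsection$\ref{subsubsec:ceil-int} and define a small induction scheme that recurses on $n$ down toward $k$, something along the lines of
\begin{lstlisting}
(definec dec-to-k (k n :nat) :nat
  :ic (<= k n)
  (if (equal n k) 0 (1+ (dec-to-k k (- n 1)))))
\end{lstlisting}
and then induct on \codify{(dec-to-k k n)}. The base case collapses to reflexivity of $\leq$, and the inductive step is exactly the one-step observation above, which should follow from the strict and non-strict monotonicity facts for the exponent that the paper has just established, combined with basic arithmetic on rationals.

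The main obstacle is not mathematical but rather stylistic: making sure the induction scheme's guards match the hypotheses of the lemma (in particular, the $k \leq n$ precondition must be preserved across the recursive call, which is why the scheme terminates only when we decrement from $n$ rather than incrementing toward it), and giving ACL2s enough arithmetic rewriting to see that $\alpha \cdot \alpha^k \leq \alpha^k$ when $0 < \alpha < 1$ and $\alpha^k > 0$. Given the monotonicity results already in hand just before the statement, I expect the proof to discharge with little more than the induction hint and perhaps a \codify{:use} of the one-step inequality $\alpha^{k+1} \leq \alpha^k$.
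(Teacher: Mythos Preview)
Your proposal is correct and essentially matches the paper's approach: the paper simply states that the lemma follows once the (strict and non-strict) monotonicity of the exponent has been established, which is exactly the one-step inequality $\alpha^{n+1} \leq \alpha^n$ that drives your inductive step. The explicit \codify{dec-to-k} scheme you propose is a reasonable way to spell out in ACL2s what the paper leaves implicit, though given the monotonicity facts already in hand you may find the prover discharges the induction without needing a custom scheme.
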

\begin{blemmaa}
For all $\epsilon > 0$ in $\mathbb{Q}_+$ and $n > d(\epsilon)$ in $\mathbb{N}$,
$1/2^n < \epsilon$.
\label{lem:half-to-n-to-0}
\end{blemmaa}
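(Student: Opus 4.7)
The plan is to chain the two main results already established in this subsection: Lemma~\ref{belowlog-to-zero}, which pins $1/2^{d(\epsilon)}$ strictly below $\epsilon$, and the immediately preceding monotonicity lemma for $\alpha^n$ when $\alpha < 1$, which lets me push the index $n$ past the witness $d(\epsilon)$ without increasing $1/2^n$.

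First I would instantiate the monotonicity lemma with $\alpha = 1/2$ and $k = d(\epsilon)$. Since the hypothesis $n > d(\epsilon)$ immediately gives $d(\epsilon) \leq n$, and $1/2 \in \mathbb{Q}_+$ with $1/2 < 1$, this yields $1/2^n \leq 1/2^{d(\epsilon)}$. Next, applying Lemma~\ref{belowlog-to-zero} to the same $\epsilon$ gives $1/2^{d(\epsilon)} < \epsilon$. Chaining the two inequalities delivers the desired $1/2^n < \epsilon$. In ACL2s this should reduce to a short sequence of \codify{:use} hints citing the two lemmas, followed by \codify{:prove}, since no nonlinear reasoning is required once the supporting facts are exposed to the linear arithmetic decision procedure.

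The only subtle point, and the part I expect to be the main obstacle, is making sure the monotonicity lemma's precondition that both indices live in $\mathbb{N}_+$ is satisfied when $d(\epsilon)$ might be zero. If $d(\epsilon) = 0$, then $1/2^{d(\epsilon)} = 1$, and Lemma~\ref{belowlog-to-zero} forces $\epsilon > 1$; in that regime $1/2^n \leq 1 < \epsilon$ holds trivially for every positive $n$, so the statement still goes through. I would therefore either add an explicit case split on whether $d(\epsilon)$ is zero or positive, or prove a small auxiliary rewrite that turns $1/2^0$ into $1$ so the degenerate branch collapses automatically. With that edge case handled, the main branch closes by linear arithmetic from the two chained inequalities.
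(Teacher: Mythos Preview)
Your approach is correct and matches the paper's: chain Lemma~\ref{belowlog-to-zero} with the non-strict monotonicity of the exponent (instantiated at $\alpha=1/2$) to obtain $1/2^n \leq 1/2^{d(\epsilon)} < \epsilon$. The paper additionally singles out the rewrite $1/2^n = (1/2)^n$ as a needed ingredient---since the monotonicity lemma is stated for $\alpha^n$, not $1/\alpha^n$, ACL2s must be told to identify the two forms---whereas you gloss over this; conversely, your treatment of the $d(\epsilon)=0$ edge case is more careful than the paper's one-line proof, which does not mention it.
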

\begin{proof}
Follows from the (non-strict) monotonicity of the exponent,
Lemma~\ref{belowlog-to-zero}, and the observation that for all $n \in \mathbb{N}$, $1/2^n = (1/2)^n$.
\end{proof}
We prove the next lemma in its given form, and rewritten using \emph{numerator} and \emph{denominator}.
\begin{blemmaa}
For all $p, q \in \mathbb{N}_+$, if $p/q < 1$ then $p < q$.
\label{lem:p-over-q-le-1-implies-p-le-q}
\end{blemmaa}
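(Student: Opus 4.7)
The plan is to dispatch this lemma essentially directly from the positivity hypothesis on $q$, since the underlying arithmetic reasoning is elementary: if $p/q < 1$ and $q > 0$, then multiplying both sides by the positive quantity $q$ preserves the direction of the inequality and yields $p \cdot (q/q) < 1 \cdot q$, i.e., $p < q$. In ACL2s, this should be picked up by the linear arithmetic decision procedure once the types (\texttt{posp} or \texttt{pos-rationalp}) of $p$ and $q$ are in scope, so I expect the proof body to be just \texttt{:proof} or at worst a single hint nudging the prover to cross-multiply.

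For the restatement in terms of \emph{numerator} and \emph{denominator}, I would appeal to two standard facts about rational numbers in ACL2: for any rational $r$, we have $r = \text{numerator}(r)/\text{denominator}(r)$, and $\text{denominator}(r) \in \mathbb{N}_+$. Given a hypothesis $r < 1$ for some positive rational $r$, these two facts let us instantiate the original lemma with $p = \text{numerator}(r)$ and $q = \text{denominator}(r)$ to conclude $\text{numerator}(r) < \text{denominator}(r)$. Concretely, I would phrase the rewritten version as a corollary whose proof uses \texttt{:use (:instance lem:p-over-q-le-1-implies-p-le-q ...)} with the appropriate bindings.

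The main obstacle, if any, will be getting ACL2s to unfold or relate the rational-number arithmetic to the natural-number form cleanly; in particular, we may need a rewrite rule asserting $\text{denominator}(r) > 0$ as a type-prescription so that the cross-multiplication step does not get stuck on a guard. If linear arithmetic stalls, a fallback is to multiply through manually by supplying $(* p q)$ and $(* q q)$ as intermediate terms via \texttt{:use} hints, but I anticipate that will not be necessary. Once this lemma is in hand, it plugs into the broader binomial-proof chain, where it is used to bound natural-number ratios that arise from the binomial expansion of $\alpha^n$.
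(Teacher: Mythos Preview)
Your proposal is correct and aligns with the paper's treatment: the paper states this lemma without an accompanying proof block, indicating it is discharged automatically (or nearly so) by ACL2s's arithmetic reasoning, exactly as you anticipate. Your handling of the numerator/denominator restatement via instantiation is also the natural route and matches the paper's one-line remark that the lemma is reproved ``rewritten using \emph{numerator} and \emph{denominator}.''
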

\begin{blemmaa}
For all $x \leq y$ and $\alpha$ in $\mathbb{N}_+$, $\alpha/y \leq \alpha/x$.
\label{lem:inv-monotone-div}
\end{blemmaa}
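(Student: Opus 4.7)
The plan is to prove this as a direct arithmetic consequence of the fact that inversion reverses the order on the positive rationals. Since $x$ and $y$ are positive naturals with $x \leq y$, we have $1/y \leq 1/x$, and then multiplication by the positive value $\alpha$ preserves this inequality, yielding $\alpha/y \leq \alpha/x$.

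In ACL2s, I would first attempt to discharge the claim automatically with \codify{:rule-classes :linear} and the already-loaded proof-by-arith book, since this is exactly the kind of linear-arithmetic-plus-multiplication fact those tools are designed for. If that fails (division by a variable sometimes trips up the linear arithmetic heuristics), I would introduce a small chain of supporting rewrites: first establish $0 < x$, $0 < y$, and $x \leq y$ from the type hypotheses; then prove the auxiliary fact $1/y \leq 1/x$ as its own lemma, either by cross-multiplying (noting $xy > 0$ so the direction of the inequality is preserved) or by invoking a suitable rule from the arithmetic library; and finally multiply through by $\alpha$.

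The main obstacle I anticipate is not mathematical but tactical: ACL2 does not always normalize $\alpha/y$ and $1/y \cdot \alpha$ to the same form, so a rewrite hint may be needed to put both sides of the goal in a common shape before the linear-arithmetic decision procedure can close it. If even this is sluggish, a last resort is to prove the equivalent cross-multiplied form $\alpha x \leq \alpha y$ first (which follows immediately from $x \leq y$ and $\alpha \geq 1$, hence $\alpha > 0$), and then divide through by $xy$ as a separate step. Either way, the lemma should be short, and once proved it plugs in directly to later arguments that need to compare $k\alpha^k/n$ against $k\alpha^k/d$ when $d \leq n$.
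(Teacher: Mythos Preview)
Your approach is correct and matches the paper's: the paper states this lemma without any explicit proof, indicating it is discharged automatically in ACL2s, which is exactly your first line of attack. Two small contextual slips worth noting: the \codify{proof-by-arith} book is loaded only in the manual variant ($\S$3.2.1), not in the semi-automatic section where this lemma lives; and your closing remark about comparing $k\alpha^k/n$ against $k\alpha^k/d$ describes the ceiling proof ($\S$3.1), whereas in the binomial proof this lemma is actually invoked to establish Lemmas~BA8, BB1, and the step $(y^y)/(1+y)^y \leq (y^y)/(2y^y)$ in BB7.
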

Finally, we get the desired result.
\begin{blemmaa}
For all $\alpha \leq 1/2$ and $\epsilon$ in $\mathbb{Q}_+$,
and for all $n \geq d(\epsilon)$ in $\mathbb{N}$,
$\alpha^n < \epsilon$.
\label{lem:a-le-half-to-zero}
\end{blemmaa}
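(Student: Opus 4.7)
The plan is to chain two bounds via transitivity. From the hypothesis $\alpha \leq 1/2$ together with $\alpha \in \mathbb{Q}_+$, I would first obtain $\alpha^n \leq (1/2)^n$ by a short induction on $n$, using that the product of non-negative monotone sequences is monotone; this is the direct analogue of Lemma~\ref{lem:alpha-leq-half-impl-alpha-d-leq-half-d} from the manual subsection and can be reused verbatim if the two subsections share a theory, or re-proved locally at minimal cost. A rewrite of $(1/2)^n$ as $1/2^n$ then follows immediately from the definition of integer exponentiation on rationals.

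Second, I would invoke Lemma~\ref{lem:half-to-n-to-0} at the given $\epsilon$ and $n$ to obtain the strict bound $1/2^n < \epsilon$. Chaining the two inequalities gives $\alpha^n \leq 1/2^n < \epsilon$, hence $\alpha^n < \epsilon$ as desired. Operationally in ACL2s, the \codify{instructions} should amount to little more than a \codify{:use} of each supporting lemma plus an \codify{:expand} (or \codify{:in-theory}) hint to normalize the base-$1/2$ exponent; once both facts are in scope the arithmetic simplifier should close the goal.

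The main obstacle I foresee is not mathematical but a small bookkeeping mismatch at the quantifier boundary: the current statement uses $n \geq d(\epsilon)$, whereas Lemma~\ref{lem:half-to-n-to-0} is phrased with the strict $n > d(\epsilon)$. Reconciling the two will likely require either a one-line restatement to align the bounds or an explicit case-split on $n = d(\epsilon)$ versus $n > d(\epsilon)$, handling the equality case by unfolding $d$ and appealing to Lemma~\ref{belowlog-to-zero} directly. A secondary concern is the degenerate case $\alpha = 0$, where $\alpha^n = 0$ for $n \geq 1$ but $\alpha^0 = 1$; this should be ruled out either by the $\mathbb{Q}_+$ typing of $\alpha$ or by checking that $d(\epsilon) \geq 1$ whenever $\epsilon < 1$, but it is worth keeping in mind when writing the guards for the \codify{defun-sk} witness.
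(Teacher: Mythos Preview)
Your approach is essentially the paper's: chain $\alpha^n \leq (1/2)^n = 1/2^n < \epsilon$, with the second strict inequality coming from Lemma~\ref{lem:half-to-n-to-0}. The only difference is which auxiliary fact supplies the first inequality: you propose re-proving base monotonicity of the exponent (the analogue of Lemma~\ref{lem:alpha-leq-half-impl-alpha-d-leq-half-d}), whereas the paper cites Lemma~\ref{lem:inv-monotone-div} together with Lemma~\ref{lem:half-to-n-to-0}; either route closes the same gap, and the text already records that both strict and non-strict monotonicity of the exponent were established earlier in this subsection, so your planned induction is redundant but harmless. Your observation about the $n \geq d(\epsilon)$ versus $n > d(\epsilon)$ mismatch is apt and is exactly the sort of boundary nuisance one has to patch in the ACL2s script; the paper does not comment on it in the prose.
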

\begin{proof}
Follows from Lemmas~\ref{lem:half-to-n-to-0} and~\ref{lem:inv-monotone-div}.
\end{proof}
Having shown two ways to derive the first step of our outlined proof strategy,
we now move on to the second step, where we invoke the binomial theorem.

\subsubsection{Remainder of Binomial Proof}
\label{subsubsec:bin-part-2}
The remainder of the proof begins with the following lemma.  The lemma is the same regardless of which strategy we take for part 1 (i.e., manual, or semi-automatic), however, its proof differs slightly with each choice.  Thus for brevity, we only include the proof assuming we took the semi-automatic approach, since it is the most recent in the text and thus the easiest to compare to here.  The alternative version given the manual approach is very nearly identical.
\begin{blemmab}
For all $\alpha = x/y \in \mathbb{Q}_+$, where $x, y \in \mathbb{N}_+$, $\alpha \leq x/(1+x)$.
\label{lem:x-over-y-leq-x-over-1-plus-x}
\end{blemmab}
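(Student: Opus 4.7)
The plan is to combine Lemma~\ref{lem:p-over-q-le-1-implies-p-le-q} (which turns $p/q < 1$ into $p < q$ for $p,q \in \mathbb{N}_+$) with Lemma~\ref{lem:inv-monotone-div} (inverse-monotonicity of division by a positive natural: $x \leq y \implies \alpha/y \leq \alpha/x$). Since the surrounding binomial proof only ever applies this lemma to an $\alpha \in [0,1)$, I am treating $\alpha < 1$ as an (implicit) hypothesis of the statement; without it the inequality is false, as witnessed e.g.\ by $\alpha = 2/1$.

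First I would apply Lemma~\ref{lem:p-over-q-le-1-implies-p-le-q} to $\alpha = x/y < 1$ with $x,y \in \mathbb{N}_+$ to conclude $x < y$, which is equivalent in $\mathbb{N}$ to $x + 1 \leq y$. Next I would instantiate Lemma~\ref{lem:inv-monotone-div} with its first two arguments set to $x+1$ and $y$ and its third argument (called ``$\alpha$'' in that lemma) set to $x$; the hypothesis $x + 1 \leq y$ is discharged by the previous step, and the conclusion $x/y \leq x/(x+1)$ is exactly the target $\alpha \leq x/(1+x)$ after the trivial rewrite $1 + x = x + 1$. I expect ACL2s to close the goal with a \codify{:use} hint for each of the two lemmas followed by \codify{:pro :prove}, with no further arithmetic massaging required.

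The main obstacle I anticipate is not the mathematics but the form in which the lemma will be consumed downstream. The binomial-theorem step that follows will apply this fact to rationals whose numerator and denominator are obtained via \codify{numerator} and \codify{denominator}, so I would likely also prove a companion version whose $x$ and $y$ are replaced by \codify{(numerator a)} and \codify{(denominator a)}, together with the side facts that $\alpha \in \mathbb{Q}_+$ forces both extracted values into $\mathbb{N}_+$. This second version follows from the first by a direct rewrite, but getting it to fire automatically (rather than loop against built-in rational-reduction rules) can be finicky; I would likely disable the general-purpose rewrites on \codify{numerator}/\codify{denominator} inside this property, or restrict the new rule to a targeted syntactic shape, so that the binomial step can use it as a one-shot algebraic rearrangement.
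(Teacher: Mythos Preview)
Your proposal is correct and matches the paper's approach exactly: the paper's proof simply reads ``Follows from Lemmas~\ref{lem:p-over-q-le-1-implies-p-le-q} and~\ref{lem:inv-monotone-div},'' which is precisely the two-step chain you spelled out. Your observation that $\alpha < 1$ is an implicit hypothesis is also on target, and your anticipated \codify{numerator}/\codify{denominator} companion version mirrors the paper's remark that Lemma~\ref{lem:p-over-q-le-1-implies-p-le-q} is proved both ``in its given form, and rewritten using \emph{numerator} and \emph{denominator}.''
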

\begin{proof}
Follows from Lemmas~\ref{lem:p-over-q-le-1-implies-p-le-q} and~\ref{lem:inv-monotone-div}.
\end{proof}
Next we import the binomial theorem into our proof.
\begin{lstlisting}
(include-book "arithmetic/binomial" :dir :system)
\end{lstlisting}
Note that the \codify{binomial} book was written in ACL2.  Since we are in ACL2s,
we have an additional obligation to check types.  So, we prove four convenient
lemmas about the types involved in the binomial expansion as defined in that book.
\begin{blemmab}
The codomain of the \codify{choose} function is a subset of $\mathbb{Z}$.
\end{blemmab}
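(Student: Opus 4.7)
The plan is to induct on the recursive structure of \codify{choose} as defined in the imported \codify{arithmetic/binomial} book. First I would inspect that definition to see whether \codify{choose} is defined via factorials or via the Pascal recurrence $\binom{n}{k} = \binom{n-1}{k-1} + \binom{n-1}{k}$ with base cases at $k = 0$ (returning $1$) and $k > n$ (returning $0$). Either presentation reduces the proof obligation to showing that every arm of the recursion yields an integer.

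If the definition uses the Pascal recurrence, the proof is essentially trivial: the two base-case literals $0$ and $1$ are in $\mathbb{Z}$, and the inductive step is a sum of two values that are integers by the inductive hypothesis, so closure of $\mathbb{Z}$ under addition completes the argument. If instead \codify{choose} is defined through factorials as $n!/(k!\,(n-k)!)$, I would first prove a helper lemma that factorials of naturals are positive integers, and then either (a) appeal to an existing book lemma tying the factorial form to Pascal's recurrence and induct as above, or (b) prove integrality directly by induction on $n$, using the identity that $\binom{n}{k}$ equals a sum of two previous values. The ACL2s type-inference machinery should then discharge the remaining goals automatically once a \codify{:type-prescription} or analogous rule class is supplied.

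The main obstacle, as hinted by the surrounding text, is not the mathematics but the ACL2/ACL2s impedance mismatch: the \codify{binomial} book was written in ACL2, which does not carry the same typing information that ACL2s expects. I would therefore phrase the lemma as a \codify{property} whose body is \codify{(integerp (choose n k))}, possibly guarded by hypotheses that $n$ and $k$ are naturals, and register it with an appropriate rule class so that subsequent type obligations (for the three further typing lemmas the authors are about to state) are dispatched automatically. The only subtle case is out-of-range inputs, where one must verify that \codify{choose} returns an integer literal (typically $0$) rather than something like \codify{nil}; if the book's definition misbehaves on such inputs, I would add a hypothesis restricting to the well-formed domain rather than attempt to patch the imported function.
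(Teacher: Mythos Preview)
Your proposal is sound and, if anything, more thorough than what the paper itself records. The paper does not give an explicit proof of this lemma at all: it simply lists it as one of four ``convenient lemmas about the types involved in the binomial expansion'' needed to bridge the ACL2/ACL2s typing gap, with no accompanying \codify{instructions}, hints, or prose argument. The implication is that these type facts go through automatically (or nearly so) once stated, which is consistent with your expectation that the Pascal-recurrence form of \codify{choose} makes integrality immediate by induction and closure of $\mathbb{Z}$ under addition.

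Where your write-up differs is in scope rather than content: you lay out contingency plans for a factorial-based definition and for ill-typed inputs, neither of which the paper addresses, presumably because the actual \codify{arithmetic/binomial} book uses the recursive definition and the lemma discharged without incident. Your framing of the result as a \codify{property} with a \codify{:type-prescription}-style rule class is exactly the right instinct for the stated purpose (feeding type information to the subsequent three lemmas), and matches the paper's motivation even though the paper leaves the mechanics implicit.
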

\begin{blemmab}
The integer-exponent of an integer is an integer.
\end{blemmab}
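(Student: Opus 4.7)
The plan is induction on the exponent, with the immediate caveat that the statement realistically requires the exponent to be a natural number, since $b^{-1}$ is not in $\mathbb{Z}$ when $b \notin \{-1, 1\}$. Assuming the exponent $n$ is a \codify{natp}, the base case $n = 0$ gives $b^0 = 1 \in \mathbb{Z}$ immediately, and the inductive step observes that $b^{n+1} = b \cdot b^n$ is a product of two integers, hence an integer.

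In \acls this lemma should essentially close itself once stated correctly: the built-in induction scheme on naturals combined with ACL2's native type reasoning about integer multiplication is enough. Concretely, I would write the property as something like \codify{(property (b :int n :nat) (integerp (expt b n)))} and expect the default prover call to dispatch it, optionally with an explicit \codify{:induct} hint on $n$ if the automatic scheme is not selected.

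The only real work is getting the type signature aligned with whatever exponent-style function the \codify{binomial} book actually exposes. Since that book was written in ACL2 rather than \acls, the exponent's type may not already be restricted to \codify{natp}; we may need either to add a \codify{natp} hypothesis explicitly or to argue over the book's own piecewise convention for negative exponents (typically treating them as $0$). This bookkeeping --- matching guards and hypotheses to the imported definitions --- is the main obstacle, such as it is, and it is much more a translation issue between the two dialects than a mathematical one; the underlying content is trivial.
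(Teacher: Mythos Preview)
Your proposal is correct and aligns with the paper's treatment: the paper states this lemma without proof, presenting it as one of four routine type-checking obligations incurred by importing the ACL2 \codify{binomial} book into an \acls environment. Your diagnosis that the mathematical content is trivial and that the only real work is reconciling guards and types across the two dialects is exactly the spirit in which the paper introduces it.
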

\begin{blemmab}
The binomial \codify{expansion} (defined in the \codify{binomial} book) is a list of naturals.
\end{blemmab}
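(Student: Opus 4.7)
My plan is to prove this type-theoretic statement by structural induction on the recursive definition of \codify{expansion} from the imported \codify{binomial} book. The binomial expansion of $(x+y)^n$ is constructed in the book as a list whose elements are terms of the form $\binom{n}{k} \cdot x^{n-k} \cdot y^k$, and my goal is just to show that, on the inputs we care about (where $x$ and $y$ are naturals), each such term lies in $\mathbb{N}$ and therefore the whole list is a \codify{nat-listp}.

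First I would unfold the definition of \codify{expansion} and induct on its recursion variable (typically an index $k$ running up to $n$). The base case yields either \codify{nil} or a singleton list and is easy. For the cons case, the \codify{cdr} of the produced list is a \codify{nat-listp} by the inductive hypothesis, so I am reduced to showing the \codify{car} is a natural. That car is a product of a \codify{choose} value and two \codify{expt} values; the two preceding type lemmas respectively tell me these all live in $\mathbb{Z}$. To upgrade from integer to natural I would add a short non-negativity lemma for \codify{choose} (combinatorially obvious, and provable by induction on its recursive definition), and observe that an integer power of a natural base with natural exponent is natural. Closure of $\mathbb{N}$ under multiplication then finishes the head case and the induction closes.

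The main obstacle I anticipate is not mathematical but infrastructural. The \codify{binomial} book was written for ACL2 rather than ACL2s, so its definitions and rewrite rules are phrased in terms of \codify{integerp} rather than the \codify{natp} or \codify{defdata} predicates ACL2s prefers; moreover, its induction scheme on \codify{expansion} may not line up with what ACL2s synthesizes by default. I therefore expect to need a small family of bridging rewrite rules that convert \codify{integerp}-plus-non-negativity hypotheses into \codify{natp} conclusions, and very likely an explicit \codify{:induct} hint that names the recursion scheme of \codify{expansion}. Once those pieces are in place, the proof should close almost automatically from the two preceding type lemmas together with my non-negativity supplement for \codify{choose}.
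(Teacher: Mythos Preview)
The paper does not give an explicit proof of this lemma: it is simply listed, together with the three surrounding type lemmas, as a ``convenient'' fact one proves after importing the ACL2 \codify{binomial} book into ACL2s, with no further elaboration. So there is nothing substantive to compare your argument against; the paper treats it as a routine type obligation.

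Your plan is sound and is essentially what any mechanized proof of this fact would look like: induct on the recursion in \codify{expansion}, discharge the head by showing the product $\binom{n}{k}\cdot x^{k}\cdot y^{n-k}$ is a natural, and close the tail by the inductive hypothesis. You are also right that Lemmas~BB2 and~BB3 as stated only give you integers, so a separate non-negativity fact for \codify{choose} (and the observation that \codify{expt} of a natural base to a natural exponent is natural) is needed to upgrade to \codify{natp}; the paper does not surface this step, but it is implicit. Your anticipated infrastructural friction---the ACL2-vs-ACL2s predicate mismatch and the possible need for an explicit \codify{:induct} hint on \codify{expansion}'s scheme---is exactly the reason the paper bothers to record these four lemmas at all, so that diagnosis is on target.
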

\begin{blemmab}
For any list of naturals, its summation under the \codify{sumlist} function is a natural.
\end{blemmab}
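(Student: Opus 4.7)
The plan is to prove this by structural induction on the input list, exploiting the fact that \codify{sumlist} is defined recursively by stripping one element off the front of the list at a time and adding it to the sum of the rest. This is the natural induction scheme that ACL2s will suggest automatically, since it matches the recursive definition of \codify{sumlist} in the \codify{binomial} book.

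First, I would state the property using the ACL2s type machinery --- something like \codify{(property (l :tl-nat) (natp (sumlist l)))} where \codify{tl-nat} denotes a true-list of naturals (either defined locally via \codify{defdata} or already available). Then I would attempt \codify{:induct (sumlist l)} and see whether the proof goes through directly. The base case reduces to showing that \codify{(sumlist nil)} --- which the binomial book defines as \codify{0} --- is a natural, and that is trivial. The inductive step reduces to showing that \codify{(+ (car l) (sumlist (cdr l)))} is a natural, given that \codify{(car l)} is a natural (by the list-of-naturals hypothesis) and that \codify{(sumlist (cdr l))} is a natural (by the inductive hypothesis). Since naturals are closed under addition, this closes automatically.

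The main obstacle I anticipate is bridging the ACL2/ACL2s typing gap. The \codify{binomial} book is written in plain ACL2, so its definitions do not carry the \codify{defdata}-style type information ACL2s likes. In particular, \codify{sumlist} may be defined to return \codify{0} on ill-formed input, or to use \codify{acl2-numberp} rather than \codify{natp} in guards, which can force some manual unfolding. I might need a small helper lemma stating that \codify{(car l)} is a natural whenever \codify{l} is a non-empty list of naturals, and similarly that \codify{(cdr l)} is itself a list of naturals, in order to discharge the hypotheses of the induction cleanly. If the automatic induction does not fire on a \codify{property} form, I would fall back to a \codify{defthm} with an explicit \codify{:induct} hint matching the recursive calls of \codify{sumlist}, after which the arithmetic closure obligations are dispatched by the usual rewriter and linear arithmetic.
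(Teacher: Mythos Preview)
Your proposal is correct and matches the paper's approach: the paper states this lemma without an explicit proof sketch, treating it as one of four routine type lemmas about the \codify{binomial} book that ACL2s dispatches essentially automatically via the obvious induction on \codify{sumlist}. Your anticipated obstacles about the ACL2/ACL2s typing gap are reasonable engineering concerns, but the paper indicates no special machinery was needed beyond stating the property.
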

Now we get to the crux of the argument.  Essentially, we will show that if $\alpha < 1$ is a rational with numerator $p$ and denominator $q$, then $\alpha = p/q \leq p/(p+1)$ and thus $\alpha^p \leq p^p/(p+1)^p$.  By the binomial theorem, $(p+1)^p \geq 2p^p$, thus $\alpha^p \leq 1/2$, allowing us to reduce to the argument from step 1.  Formally speaking, we accomplish this through the following sequence of lemmas.
\begin{blemmab}
For all $n \in \mathbb{N}_+$, $2n^n \leq (1+n)^n$.
\label{lem:2-n-n-leq-1-plus-n-n}
\end{blemmab}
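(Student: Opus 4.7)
The plan is to apply the binomial theorem directly. Expanding gives
\[
(1+n)^n = \sum_{k=0}^{n} \binom{n}{k} n^k,
\]
so the inequality will follow by isolating the top two terms of this sum and bounding the remaining terms below by zero.

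Specifically, I would first observe that the $k=n$ term equals $\binom{n}{n} n^n = n^n$, and the $k=n-1$ term equals $\binom{n}{n-1} n^{n-1} = n \cdot n^{n-1} = n^n$. These two terms alone already contribute $2 n^n$. Since every term $\binom{n}{k} n^k$ in the expansion is a natural number (which the preceding typing lemmas about \codify{choose}, \codify{expansion}, and \codify{sumlist} guarantee), all remaining summands for $0 \leq k \leq n-2$ are non-negative, so
\[
(1+n)^n = \sum_{k=0}^{n} \binom{n}{k} n^k \;\geq\; n^n + n^n \;=\; 2 n^n,
\]
which is the desired inequality.

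To carry this out in ACL2s on top of the imported \codify{binomial} book, I would first state a small lemma identifying the value of the last two entries of \codify{(expansion 1 n n)} as $n^n$ and $n \cdot n^{n-1}$ respectively, then a second lemma saying that removing the two largest terms from the \codify{sumlist} of a list of naturals gives a natural (hence non-negative). Combining these with the already-proved type lemmas lets the linear arithmetic decision procedure discharge the final comparison. The main obstacle I anticipate is not the mathematics, which is elementary, but rather addressing the indexing inside the \codify{expansion}/\codify{sumlist} formulation: extracting particular positional terms from a recursively defined list and equating them with $n^n$ likely requires an auxiliary induction on $n$, and the base case $n=1$ (where $n-1 = 0$ and the two ``top'' terms coincide) may need to be handled separately to avoid an off-by-one issue.
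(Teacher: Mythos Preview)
Your proof is correct and matches the paper's approach exactly: the paper's proof is the one-line remark that the result ``follows from the binomial theorem because $(1+n)^n \geq 1 + \ldots + n\,n^{n-1} + n^n$,'' i.e., isolate the top two terms $n\cdot n^{n-1}$ and $n^n$ of the expansion, each equal to $n^n$, and use non-negativity of the rest. Your discussion of the ACL2s-level bookkeeping (extracting positional terms from \codify{expansion} and appealing to the typing lemmas so \codify{sumlist} of the remainder is a natural) is more explicit than the paper but follows the same line; note that for $n=1$ the indices $k=n-1=0$ and $k=n=1$ are still distinct, so no special-casing is actually needed there.
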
  
\begin{proof}
Follows from the binomial theorem because $(1+n)^n \leq 1 + \ldots + n n^{n-1} + n^n$.
\end{proof}

Now we prove a ``helper lemma'', similar to what we did in prior proof strategies
but this time for the goal of ``squeezing'' $\alpha^n$ below $1/2$.
\begin{blemmab}
For all $\alpha < 1$ in $\mathbb{Q}_+$, $\alpha^{\text{numerator}(\alpha)} \leq 1/2$.
\end{blemmab}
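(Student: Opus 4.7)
The plan is to combine Lemmas~\ref{lem:x-over-y-leq-x-over-1-plus-x} and~\ref{lem:2-n-n-leq-1-plus-n-n} in exactly the pattern sketched in the preceding paragraph of the text. Write $p = \text{numerator}(\alpha)$ and $q = \text{denominator}(\alpha)$ so that $\alpha = p/q$; since $\alpha$ is a positive rational strictly less than $1$, both $p$ and $q$ lie in $\mathbb{N}_+$ with $p < q$.

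First I would apply Lemma~\ref{lem:x-over-y-leq-x-over-1-plus-x} to obtain $\alpha \leq p/(p+1)$. Next I would use the (non-strict) monotonicity of the exponent established in $\mathsection$\ref{subsubsec:automatic-half-0} to raise both nonnegative sides to the $p$-th power, yielding
\[
\alpha^p \;\leq\; \left(\frac{p}{p+1}\right)^p \;=\; \frac{p^p}{(p+1)^p}.
\]
Then I would invoke Lemma~\ref{lem:2-n-n-leq-1-plus-n-n} with $n = p$ to get $2 p^p \leq (p+1)^p$, which rearranges to $p^p/(p+1)^p \leq 1/2$. Chaining the two inequalities gives $\alpha^p \leq 1/2$, i.e., $\alpha^{\text{numerator}(\alpha)} \leq 1/2$, as desired.

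The step I expect to be the main obstacle is neither the algebra nor the invocations of the two main lemmas, but rather convincing ACL2s of the routine manipulations in between: distributing \codify{expt} across division to identify $(p/(p+1))^p$ with $p^p/(p+1)^p$, and flipping $2 p^p \leq (p+1)^p$ into $p^p/(p+1)^p \leq 1/2$ without losing the positivity hypotheses that license the division. I would likely stage these rearrangements as small helper lemmas (or lean on the arithmetic books) and discharge their guards explicitly. A secondary subtlety is ensuring that $\text{numerator}(\alpha) \in \mathbb{N}_+$ is visible to the type checker; this follows from $\alpha \in \mathbb{Q}_+$ but may need to be asserted as a hypothesis so that the monotonicity and binomial lemmas fire without extra massaging.
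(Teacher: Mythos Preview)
Your proposal is correct and follows essentially the same route as the paper: invoke Lemma~\ref{lem:x-over-y-leq-x-over-1-plus-x} to bound $\alpha$ by $p/(p+1)$, raise to the $p$-th power via exponent monotonicity, and then use Lemma~\ref{lem:2-n-n-leq-1-plus-n-n} to bound $(p/(p+1))^p$ by $1/2$. The paper orders the steps slightly differently (establishing $(p/(p+1))^p \leq 1/2$ first via Lemma~\ref{lem:inv-monotone-div}, then applying the bound on $\alpha$) and tacks on a reference to Lemma~\ref{lem:a-le-half-to-zero} to close out the overall binomial argument, but the substance is identical.
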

\begin{proof}
Let $x/y = \alpha$.
By Lemma~\ref{lem:2-n-n-leq-1-plus-n-n}, $2y^y \leq (1+y)^y$.
By Lemma~\ref{lem:inv-monotone-div}, $(y^y)/(1+y)^y < (y^y)/(2y^y)$.
But since for all $a, b, c \in \mathbb{Q}_+, a^b/c^b = (a/c)^b$ and $a/2a = 1/2$,
it immediately follows that $(y/(1+y))^y < 1/2$.
By Lemma~\ref{lem:x-over-y-leq-x-over-1-plus-x}, $\alpha \leq \alpha/(1+\alpha)$.
Combining this with the (non-strict) monotinicity of the exponent,
clearly $\alpha^y \leq (\alpha/(1+\alpha))^y < 1/2$.
The rest follows from Lemma~\ref{lem:a-le-half-to-zero} 
(or the equivalent result, in the case of the manual proof).
\end{proof}

\subsection{Summary and Closing Thoughts}\label{subsec:closing}
In this Section we provided two rational proof strategies for Theorem~\ref{thm:a-n-0} --
the ``ceiling proof'' and the ``binomial proof'' --
both of which we implemented using ACL2s.
Since the rationals are dense in the reals, these proof strategies equally apply to the real numbers.
The ceiling proof, which is the one we used in our prior work~\cite{karn},
involved first proving an intermediary lemma about the ceiling function.
Specifically, assuming 
$0 < \alpha < 1$
and setting $k = \ceil{\alpha/(1-\alpha)}$ and $\delta = \ceil{k\alpha^k/\epsilon}$,
we proved that $\forall n \geq \delta$,
$\alpha^n \leq f_\alpha(n)$.
Since $f_\alpha(n) \leq \epsilon$, the result directly followed.
However, to prove this we first had to establish many arithmetic lemmas about the ceiling function,
so although straightforward on paper, the proof was comparatively arduous in ACL2s.

For the ``binomial proof'', we broke the problem into two steps, 
first showing that if $0 < \alpha \leq 1/2$ then $\lim_{n \to \infty} \alpha^n = 0$,
and then for any $\alpha \in (1/2, 1)$, constructing a $\delta$ such that $n > \delta \implies \alpha^n \leq 1/2$.  For the first step, we showed two different approaches, one manual and the other semi-automatic.  What made the second approach semi-automatic was that we used the termination
analysis capabilities of ACL2s to find the ``$\delta$'' for our $\epsilon/\delta$ proof automatically.
However, we still had to prove that this $\delta$ satisfied Definition~\ref{def:limit}.
For the second step, we used the binomial theorem to show that for all positive integers $p$, $2p^p \leq (p+1)^p$ and therefore, if $\alpha = p/q \leq p/(p+1)$ then $\alpha^p \leq (p/(p+1))^p \leq 1/2$.  
Overall, both versions of the binomial proof were considerably simpler (in terms of lines of code) than the ceiling proof in ACL2s, and the comparison is fair given that both strategies required importing preexisting books (refer to Table~\ref{tab:comparison}).

Next, we return to the RTO, and show how any proof of Theorem~\ref{thm:a-n-0} allows us to characterize the asymptotic behaviors of the \srtt and \rttvar.
We also discuss the implications of these results for the \rto.

\section{Analysis of RTO Calculation}\label{sec:context}
Recall from Eqn~\ref{eqn:rto} that the \rto is defined over the 
\rttvar, \srtt, and RTT sample \sample;
the \rttvar is defined over the \sample and the prior \rttvar and \srtt;
and the \srtt is defined over the \sample and prior \srtt.
Thus, going from the inside out, we begin by characterizing the \srtt;
then we use that analysis to aid our characterization of the \rttvar;
and finally we bring it all together to analyze the \rto.

All of our work will be done under the ``steady-state'' assumption defined below.
The purpose of this assumption is to define what it means for the network to exhibit bounded amounts
of oscillation.  Note however that this assumption does not limit the \emph{rate} of oscillation in the 
sample values, for example, it does not require that the samples be drawn from the image of some Lipschitz continuous function.
\begin{definition}[$c/r$ Steady State]
Let $c, r > 0$ be rationals and suppose that $\sample_i, \sample_{i+1}, \ldots, \sample_{i+n} \in [c-r, c+r]$.  Then we refer to the samples $\sample_j$ for $j = i, \ldots, i+n$ as being in a \emph{$c/r$ steady-state}.
\label{def:steady-state}
\end{definition}
For example, if $\sample_i, \sample_{i+1}, \ldots, \sample_{i+n}$ are drawn from the uniform distribution over $[12.3, 75]$ms, then they are in a $43.65/31.35$ steady-state.
Since every finite set achieves both a minimum and a maximum, all finite sequences of samples are technically speaking steady-state sequences, however, the same cannot be said for infinite sequences, such as the infinite sequence $\sample_j = 12.3 + 2j$ for $j \in \mathbb{N}$.

Without loss of generality, for the rest of this section we will assume 
samples $\sample_i, \sample_{i+1}, \ldots, \sample_{i+n}$ are in a $c/r$ steady-state.
We say WLOG because, we will consider both $n \in \mathbb{N}$ and the limit as $n \to \infty$.
We begin by analyzing the \srtt.  Recall, $\srtt_i = \sample_i$ if $i = 1$ else $(1-\alpha)\srtt_{i-1} + \alpha \sample_i$.  Since $\sample_i \in [c-r, c+r]$:
\begin{equation}
(1-\alpha)\srtt_{i-1} + \alpha(c-r) \leq \srtt_i \leq (1-\alpha)\srtt_{i-1} + \alpha(c+r)
\label{eqn:srtt-bounds-initial}
\end{equation}
Note that both bounds in Eqn.~\ref{eqn:srtt-bounds-initial} have the shape $(1-\alpha)\srtt_{i-1} + \alpha C$ for some constant $C$.  Recursing on this shape, we get the following.
\begin{lemma}
Let $C \in \mathbb{Q}_+$ and suppose that for all natural $0 \leq k \leq n$, we have
$f(k) = (1-\alpha)\srtt_{i-1} + \alpha C$.
Then the following holds for all $0 \leq k \leq n$.
\begin{equation}
\begin{aligned}
f(k) & = (1-\alpha)^{k+1} \srtt_{i-1} + \big( \sum_{j=0}^k (1-\alpha)^j \alpha \big) C \\
& = (1-\alpha)^{k+1} \srtt_{i-1} + \big( (\alpha - 1) (1-\alpha)^k + 1 \big) C
\end{aligned}
\end{equation}
\label{lem:alpha-summation}
\end{lemma}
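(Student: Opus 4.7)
My plan is to unpack the lemma into its two asserted equalities and tackle them separately. I read the hypothesis as a recurrence in the natural shape noted just before the lemma: $f(0) = (1-\alpha)\srtt_{i-1} + \alpha C$ and, for $k \geq 1$, $f(k) = (1-\alpha) f(k-1) + \alpha C$ (the statement appears to have a typo that drops the recursive call). Both equalities are purely algebraic, so I would prove the first by induction on $k$ and the second by a geometric-series simplification.

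For the first equality, the base case $k=0$ reduces the right-hand side to $(1-\alpha)^1 \srtt_{i-1} + (1-\alpha)^0 \alpha C = (1-\alpha)\srtt_{i-1} + \alpha C$, which matches $f(0)$. For the inductive step, assuming $f(k) = (1-\alpha)^{k+1} \srtt_{i-1} + \bigl(\sum_{j=0}^k (1-\alpha)^j \alpha\bigr) C$, I expand
\[
f(k+1) \;=\; (1-\alpha) f(k) + \alpha C \;=\; (1-\alpha)^{k+2}\srtt_{i-1} + \Bigl(\sum_{j=0}^k (1-\alpha)^{j+1}\alpha\Bigr) C + \alpha C,
\]
and then re-index $j \mapsto j-1$ in the sum, absorbing the trailing $\alpha C$ as the new $j = 0$ term, to obtain $\sum_{j=0}^{k+1}(1-\alpha)^j \alpha$, as required.

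For the second equality, I would apply the standard geometric-series identity $\sum_{j=0}^k r^j = (1 - r^{k+1})/(1-r)$ with $r = 1-\alpha$. Since $1 - r = \alpha$, this immediately yields $\sum_{j=0}^k (1-\alpha)^j \alpha = 1 - (1-\alpha)^{k+1}$, and rewriting $-(1-\alpha)^{k+1} = (\alpha - 1)(1-\alpha)^k$ gives the desired form $(\alpha - 1)(1-\alpha)^k + 1$. This identity also admits a direct inductive proof from scratch if the closed form is not already available as a book lemma.

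The main obstacle in ACL2s is not the algebra but the symbolic manipulation of the summation: the re-indexing step in the induction, where we shift $j \mapsto j+1$ under a $\sum$, requires either a suitable summation book (akin to the \codify{binomial} book used in $\mathsection$\ref{subsec:binom}) or a purpose-built recursive function. If such infrastructure is not already at hand, I would sidestep the general machinery by recasting the summation itself as a recursive function $g(k)$ with $g(0) = \alpha$ and $g(k) = g(k-1) + (1-\alpha)^k \alpha$, and then prove both equalities by induction on $g$'s own scheme, which matches ACL2s's automatic induction analysis very well.
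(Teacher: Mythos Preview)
Your proposal is correct and matches the paper's own treatment. The paper does not spell out a proof for this lemma beyond the phrase ``recursing on this shape'' and, in the ACL2s strategy that follows, ``first we derive the closed form for $\sum_{j=0}^k (1-\alpha)^j \alpha$ and use it to rewrite $\srtt_{i+n}$''; your induction on $k$ followed by the geometric-series simplification is exactly that, and your reading of the hypothesis as the recurrence $f(k) = (1-\alpha) f(k-1) + \alpha C$ (rather than the constant as literally printed) is the intended one.
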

Applying Lemma~\ref{lem:alpha-summation} to Eqn.~\ref{eqn:srtt-bounds-initial} we get the following.
\begin{theorem}Suppose 
$\sample_i, \sample_{i+1}, \ldots, \sample_{i+n}$ are in a $c/r$ steady-state.
Then $L \leq \srtt_{i+n} \leq H$ where ...
\begin{equation}
\begin{aligned}
L & = (1-\alpha)^{n+1} \srtt_{i-1} + \big( 1 - (1-\alpha)^{n+1} \big) (c-r) \text{, and } \\
H & = (1-\alpha)^{n+1} \srtt_{i-1} + \big( 1 - (1-\alpha)^{n+1} \big) (c+r) \\
\end{aligned}
\label{eqn:srtt-bounds}
\end{equation}
\end{theorem}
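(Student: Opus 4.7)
The plan is to proceed by induction on $n$, using the recursive definition of $\srtt$ together with the pointwise bounds $\sample_{i+k} \in [c-r, c+r]$ guaranteed by the $c/r$ steady-state hypothesis. The base case $n = 0$ is essentially Eqn.~\ref{eqn:srtt-bounds-initial}: substituting $n = 0$ into the closed forms for $L$ and $H$ yields $(1-\alpha)\srtt_{i-1} + \alpha(c-r)$ and $(1-\alpha)\srtt_{i-1} + \alpha(c+r)$ respectively, since $1 - (1-\alpha)^{1} = \alpha$.

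For the inductive step, write $L_n$ and $H_n$ for the closed-form expressions at index $n$, and assume $L_n \leq \srtt_{i+n} \leq H_n$. Since $\alpha \in (0,1)$ so that $1 - \alpha \geq 0$, and since $\sample_{i+n+1} \in [c-r, c+r]$, the recurrence for $\srtt$ yields the two-sided inequality $(1-\alpha) L_n + \alpha(c-r) \leq \srtt_{i+n+1} \leq (1-\alpha) H_n + \alpha(c+r)$. It then suffices to verify algebraically that $(1-\alpha) L_n + \alpha(c-r) = L_{n+1}$, and symmetrically for the upper bound. The coefficient of $\srtt_{i-1}$ becomes $(1-\alpha)^{n+2}$, and the coefficient of $(c-r)$ reduces via the identity $(1-\alpha)(1-(1-\alpha)^{n+1}) + \alpha = 1 - (1-\alpha)^{n+2}$, which matches the claimed form exactly.

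An alternative route is to appeal directly to Lemma~\ref{lem:alpha-summation} after establishing a monotonicity principle: if $(\sample_k)$ and $(\sample_k')$ are two sample sequences with $\sample_k \leq \sample_k'$ for every $k$, then the induced $\srtt$ sequences satisfy $\srtt_{i+n} \leq \srtt_{i+n}'$ for all $n$. Specializing to the constant sequences $\sample_k \equiv c-r$ and $\sample_k' \equiv c+r$ and invoking the Lemma on each gives the closed forms $L$ and $H$ immediately. The monotonicity principle itself is a short induction leveraging $1 - \alpha \geq 0$.

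The main obstacle is purely bookkeeping rather than conceptual: namely, carefully carrying out the simplification of $(1-\alpha)(1-(1-\alpha)^{n+1}) + \alpha$ (or, if Lemma~\ref{lem:alpha-summation} is used, reconciling its single-$C$ formulation with a two-sided bound by applying it twice). No analytic machinery is required at this stage, as $n$ is finite; the convergence consequences driven by Theorem~\ref{thm:a-n-0} are reserved for the subsequent asymptotic analysis of $L$ and $H$.
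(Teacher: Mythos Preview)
Your proposal is correct, and in fact your ``alternative route'' is essentially the paper's own proof: the paper first derives the closed form for the geometric sum (Lemma~\ref{lem:alpha-summation}), uses it to compute $\srtt_{i+n}$ under the assumption that all samples are constant, and then argues that in the $c/r$ steady-state the true $\srtt_{i+n}$ is sandwiched between the values it would take for the constant sequences $c-r$ and $c+r$---precisely your monotonicity principle specialized to those two extremes. Your primary route, the direct induction on $n$ that verifies $L_n \leq \srtt_{i+n} \leq H_n$ and checks the algebraic identity $(1-\alpha)(1-(1-\alpha)^{n+1}) + \alpha = 1 - (1-\alpha)^{n+2}$ in the step, is a genuinely different decomposition: it folds the closed-form computation and the monotonicity argument into a single induction, so you never need to isolate Lemma~\ref{lem:alpha-summation} or the pointwise-comparison principle as separate results. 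The paper's decomposition has the advantage that Lemma~\ref{lem:alpha-summation} is reusable (it is later applied to the $\rttvar$ bound with $\beta$ in place of $\alpha$), whereas your direct induction is more self-contained for this one theorem.
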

In \acls, our proof strategy goes as follows.  First we derive the closed form for $\sum_{j=0}^k (1-\alpha)^j \alpha$ and use it to rewrite $\srtt_{i+n}$ under the assumption that $\sample_i = \sample_{i+1} = \ldots = \sample_{i+n}$.  Then we show that in the $c/r$ steady-state scenario, the lower bound $L$ on the $\srtt_{i+n}$ is the value $\srtt_{i+n}$ would take if all the samples equaled $c-r$, and likewise the upper bound $H$ is the value $\srtt_{i+n}$ would take if all the samples equaled $c+r$.  Finally, we simplify and get the desired result.  Finally, we look at the asymptotic case.
\begin{theorem}
$\lim_{n\to\infty} L = c-r$ and $\lim_{n\to\infty} H = c+r$.
\end{theorem}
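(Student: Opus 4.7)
The plan is to reduce each of these two limits to a direct application of Theorem~\ref{thm:a-n-0} with base $1-\alpha$ rather than $\alpha$. First I would algebraically rewrite the error terms:
\[
L - (c-r) = (1-\alpha)^{n+1}\srtt_{i-1} + \bigl(1 - (1-\alpha)^{n+1}\bigr)(c-r) - (c-r) = (1-\alpha)^{n+1}\bigl(\srtt_{i-1} - (c-r)\bigr),
\]
and symmetrically $H - (c+r) = (1-\alpha)^{n+1}\bigl(\srtt_{i-1} - (c+r)\bigr)$. In both cases the deviation from the claimed limit has the shape $(1-\alpha)^{n+1} \cdot K$, where $K$ is a fixed rational constant depending only on $\srtt_{i-1}$ and on $c \pm r$, and in particular does not depend on $n$.

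Next, unfolding Definition~\ref{def:limit}: given $\epsilon > 0$, I must exhibit a $\delta$ such that $n > \delta$ implies $\bigl|(1-\alpha)^{n+1} K\bigr| < \epsilon$. If $K = 0$ the bound holds for any $\delta$. Otherwise $|K| > 0$, and I want $(1-\alpha)^{n+1} < \epsilon/|K|$. Since $\alpha \in (0,1)$ we have $1-\alpha \in (0,1)$, so Theorem~\ref{thm:a-n-0} instantiated at base $1-\alpha$ and tolerance $\epsilon/|K|$ yields a witness $\delta_0$ with $(1-\alpha)^m < \epsilon/|K|$ for every $m > \delta_0$. Taking $\delta = \delta_0$ (with a small re-indexing to account for the exponent being $n+1$ rather than $n$) discharges the claim for $L$, and the argument for $H$ is identical up to the sign of $K$.

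The main obstacle is bookkeeping rather than mathematics: one must verify each Theorem~\ref{thm:a-n-0} instantiation is well-typed (i.e., $1-\alpha$ is a positive rational strictly below $1$, and $\epsilon/|K|$ is a positive rational), handle the $K=0$ corner case carefully so that ACL2s does not stumble on a division-by-zero guard, and manage the index shift from $n+1$ to $n$. I expect this to reduce cleanly to two auxiliary lemmas --- that $\lim_{n\to\infty}\alpha^n = 0$ implies $\lim_{n\to\infty}\alpha^{n+1} = 0$, and that multiplying a zero-convergent sequence by a fixed rational constant preserves convergence to zero --- after which the witness for the \codify{defun-sk} existential falls out by direct substitution.
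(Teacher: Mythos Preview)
Your proposal is correct and follows essentially the same route as the paper: both abstract $L$ and $H$ into a single expression parameterized by $C\in\{c-r,c+r\}$, observe that $1-\alpha\in(0,1)$, and invoke Theorem~\ref{thm:a-n-0} at base $1-\alpha$ to kill the exponential factor. Your presentation is slightly cleaner in that you first factor the error term to $(1-\alpha)^{n+1}K$ and then do the $\epsilon/\delta$ bookkeeping explicitly (including the $K=0$ case and the $n\mapsto n{+}1$ shift), whereas the paper just remarks that both occurrences of $(1-\alpha)^{\cdot}$ vanish and leaves $C$; but the underlying argument is the same.
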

\begin{proof}
For simplicity consider: $f(n) = (1-\alpha)^{n+1} \srtt_{i-1} + \big( (\alpha - 1) (1-\alpha)^n + 1 \big) C$.
Since $0 < \alpha < 1$, we know $0 < 1-\alpha < 1$, so by Theorem~\ref{thm:a-n-0},
$(1-a)^{n+1} \to 0$, and likewise for $(1-\alpha)^n$.  This leaves only the term $C$.
Thus $\lim_{n \to \infty} f(n) = C$.  Since $L$ is just $f(n)$ with $C = c-r$ and $H$ is just $f(n)$ with $C = c+r$, the result immediately follows.
\end{proof}

Next, we consider the \rttvar calculation.  Recall that 
\(\rttvar_i = \sample_i / 2\) if $i = 1$ else 
\((1-\beta) \rttvar_{i-1} + \beta \lvert \srtt_{i-1} - \sample_i \rvert\),
where $\beta < 1$ is constant in $\mathbb{Q}_+$.
To simplify this equation, we will consider the case where
$\lvert \srtt_{i-1} - \sample_i \rvert$ is upper-bounded by some constant $\Delta$.
(Then we will show how to derive such a $\Delta$).
\begin{lemma}
Suppose 
$\sample_i, \sample_{i+1}, \ldots, \sample_{i+n}$ are in a $c/r$ steady-state,
and $\Delta > 0$ upper-bounds \(\lvert \srtt_{j-1} - \sample_j \rvert\)
for each $j = i, i+1, \ldots, i+n$.
Then:
\begin{equation}
\rttvar_{i+n} \leq (1-\beta)^{n+1} \rttvar_{i-1} + (1-(1-\beta)^{n+1})\Delta
\label{eqn:upper-bound-rttvar}
\end{equation}
\end{lemma}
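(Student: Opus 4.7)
The plan is to prove the bound by induction on $n$, exactly paralleling the derivation of the $\srtt$ bound earlier in this section but with inequality replacing equality throughout. First I would use the hypothesis to derive the single-step bound
\[ \rttvar_j \leq (1-\beta)\rttvar_{j-1} + \beta\Delta \]
for every $j = i, i+1, \ldots, i+n$: this follows directly from the definition $\rttvar_j = (1-\beta)\rttvar_{j-1} + \beta\lvert \srtt_{j-1} - \sample_j \rvert$ together with the assumption that $\lvert \srtt_{j-1} - \sample_j \rvert \leq \Delta$ and the fact that $\beta > 0$.

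For the base case $n = 0$, the single-step bound is exactly the claimed inequality, since $(1-\beta)^{1} = 1-\beta$ and $1 - (1-\beta)^{1} = \beta$. For the inductive step, I would assume the bound holds for $n$ and apply the single-step bound at index $i+n+1$. Monotonicity of the map $x \mapsto (1-\beta)x + \beta\Delta$ (which holds because $1-\beta > 0$) lets me feed the inductive hypothesis through, yielding
\[ \rttvar_{i+n+1} \leq (1-\beta)\bigl[(1-\beta)^{n+1}\rttvar_{i-1} + (1-(1-\beta)^{n+1})\Delta\bigr] + \beta\Delta. \]
The algebraic identity $(1-\beta)\bigl(1 - (1-\beta)^{n+1}\bigr) + \beta = 1 - (1-\beta)^{n+2}$ then reduces the right-hand side to the claimed bound at $n+1$.

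An alternative approach, which may factor better in \acls, is to introduce an auxiliary equality-sequence $u_{i-1} = \rttvar_{i-1}$ and $u_j = (1-\beta)u_{j-1} + \beta\Delta$; prove $\rttvar_{i+n} \leq u_{i+n}$ by the same monotonicity induction; and then apply the $\beta$-analog of Lemma~\ref{lem:alpha-summation} to get the closed form for $u_{i+n}$. This separates the geometric-series manipulation (already done once for $\srtt$) from the inequality argument.

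I expect the mathematics to be routine; the main obstacle will be mechanical. Ensuring the single-step inequality is available at each step of the induction requires propagating the steady-state and $\Delta$ hypotheses correctly across the index shift from $i+n$ to $i+n+1$, and the base-case exponent rewrite $1 - (1-\beta) = \beta$ may need an explicit hint. Beyond that, no new real-analysis content is required — and crucially, Theorem~\ref{thm:a-n-0} is not invoked here, since this lemma is purely a bound for finite $n$; Theorem~\ref{thm:a-n-0} only enters later when taking the limit as $n \to \infty$ alongside a suitable choice of $\Delta$ via Eqn.~\ref{eqn:bound:rttvar}.
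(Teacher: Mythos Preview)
Your proposal is correct. The paper does not give an explicit proof of this lemma; it states the bound and moves on, evidently treating it as the $\beta$-analogue of the $\srtt$ analysis already done via Lemma~\ref{lem:alpha-summation}. Your ``alternative approach'' --- introducing the auxiliary sequence $u_j = (1-\beta)u_{j-1} + \beta\Delta$, bounding $\rttvar$ by it, and reading off the closed form from the geometric-series lemma --- is exactly the implicit argument the paper relies on, and your direct induction is an equivalent unrolling of the same recursion. Your remark that Theorem~\ref{thm:a-n-0} plays no role here is also correct and worth noting.
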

Now we want to derive a bound on \(\lvert \srtt_{j-1} - \sample_j \rvert\).
Note that this bound is at most:
\begin{equation}
\max\{ \lvert L - (c+r) \rvert, \lvert H - (c-r) \rvert \}
\end{equation}
If the larger of the two options is $\lvert H - (c-r) \rvert$ we derive the following.
\begin{equation}
\begin{aligned}
\Delta & = \lvert (1-\alpha)^{j+1} \srtt_{i-1} +
(1-(1-\alpha)^{j+1}) (c+r) -
(c-r) \rvert \\
& = \lvert (1-\alpha)^{j+1} \srtt_{i-1} +
(c+r) - (1-\alpha)^{j+1}(c+r) - c + r \rvert \\
& = \lvert (1-\alpha)^{j+1} \srtt_{i-1} + 2r - (1-\alpha)^{j+1}(c+r) \rvert
\end{aligned}
\end{equation}
By Theorem~\ref{thm:a-n-0}, clearly $\lim_{j \to \infty} \Delta = 2r$.  
On the other hand, if the larger option is $\lvert L - (c+r) \rvert$, we derive
$\Delta = \lvert (1-\alpha)^{j+1} \srtt_{i-1} - 2r + (1-\alpha)^{j+1}(c-r) \rvert$ which asymptotes at $\lvert  -2r \rvert = 2r$.
So either way, as $j$ grows $\to \infty$, $\Delta$ converges to $2r$.
Now suppose $\Delta = 2r$.  Then the upper bound on $\srtt_{i+n}$ from Eqn~\ref{eqn:upper-bound-rttvar}
is $(1-\beta)^{n+1} \rttvar_{i-1} + (1-(1-\beta)^{n+1})2r$.
Which gives us the following.
\begin{theorem}
Suppose 
$\sample_i, \sample_{i+1}, \ldots, \sample_{i+n}$ are in a $c/r$ steady-state.
Then there exists an upper bound on $\rttvar_{i+n}$ which, as $n \to \infty$, converges to $2r$.
\end{theorem}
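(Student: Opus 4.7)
The plan is to combine Eqn.~\ref{eqn:upper-bound-rttvar} with two applications of Theorem~\ref{thm:a-n-0} --- the first to tame the $(1-\alpha)^{j+1}$ terms hiding inside the bound $\Delta$ on $|\srtt_{j-1}-\sample_j|$, and the second to tame the $(1-\beta)^{n+1}$ factor in front of $\rttvar_{i-1}$. Concretely, Eqn.~\ref{eqn:upper-bound-rttvar} supplies the upper bound
\[
B_{\Delta}(n) = (1-\beta)^{n+1}\rttvar_{i-1} + \bigl(1-(1-\beta)^{n+1}\bigr)\Delta,
\]
and the derivation just above the theorem shows that the asymptotic value of $\Delta_j$ is $2r$ (by Theorem~\ref{thm:a-n-0} with $1-\alpha$ in place of $\alpha$, observing that both candidate expressions for $\Delta_j$ collapse to $|2r| = 2r$ as the $(1-\alpha)^{j+1}$ factors vanish). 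Applying Theorem~\ref{thm:a-n-0} a second time, now with $1-\beta$ in place of $\alpha$, yields $(1-\beta)^{n+1}\to 0$, which collapses the $\rttvar_{i-1}$ term in $B_{2r}(n)$ and sends the coefficient on $2r$ to $1$, giving $\lim_{n\to\infty} B_{2r}(n) = 2r$.

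The one subtlety --- and what I expect to be the main obstacle --- is that $B_{2r}(n)$ is only a genuine upper bound on $\rttvar_{i+n}$ when $2r$ actually dominates $|\srtt_{j-1}-\sample_j|$ uniformly over $j \in [i, i+n]$, which need not hold at small $j$ where the $\srtt$ recursion has not yet contracted. To avoid this infidelity I would use the two-parameter bound of Eqn.~\ref{eqn:bound:rttvar},
\[
B(n, m) = (1-\beta)^{n+1-m}\rttvar_{i+m-1} + \bigl(1-(1-\beta)^{n+1-m}\bigr)\Delta_m,
\]
which, for each fixed $m$, is a correct upper bound on $\rttvar_{i+n}$ for every $n \geq m$, since $\Delta_m$ dominates $|\srtt_{j-1}-\sample_j|$ uniformly for $j \geq i+m$ --- a fact inherited from the contraction analysis of $\srtt$ in Eqn.~\ref{eqn:srtt-bounds}. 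Choosing any diagonal schedule such as $m(n) = \lfloor n/2 \rfloor$ makes both $n+1-m(n)$ and $m(n)+1$ tend to infinity; the two applications of Theorem~\ref{thm:a-n-0} above then send $(1-\beta)^{n+1-m(n)}$ and $(1-\alpha)^{m(n)+1}$ to $0$, from which $\lim_{n\to\infty} B(n, m(n)) = 2r$ follows by routine algebra, establishing the theorem.
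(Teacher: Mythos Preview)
Your proposal is correct and follows essentially the same line as the paper: derive $\Delta_j \to 2r$ via Theorem~\ref{thm:a-n-0} applied to $(1-\alpha)^{j+1}$, feed that into the $\rttvar$ bound of Eqn.~\ref{eqn:upper-bound-rttvar}, and apply Theorem~\ref{thm:a-n-0} again to $(1-\beta)^{n+1}$. The paper's proof is terser --- it simply writes ``suppose $\Delta = 2r$'' and then observes $\lim_{n\to\infty}(1-\beta)^{n+1}\rttvar_{i-1} = \lim_{n\to\infty}(1-\beta)^{n+1}\cdot 2r = 0$ --- without explicitly addressing the validity concern you raise about whether $2r$ really dominates $|\srtt_{j-1}-\sample_j|$ at small $j$. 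Your diagonal schedule $m(n)=\lfloor n/2\rfloor$ through the two-parameter bound of Eqn.~\ref{eqn:bound:rttvar} is a cleaner way to make the double limit honest, and is in fact what the paper gestures at in the introduction when it speaks of sending both $n$ and $m<n$ to infinity; the Section~\ref{sec:context} proof just doesn't spell this out.
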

\begin{proof}Follows from Theorem~\ref{thm:a-n-0} because $\lim_{n \to \infty} (1-\beta)^{n+1} \rttvar_{i-1} = \lim_{n \to \infty} (1-\beta)^{n+1} 2r = 0$.
\end{proof}

Finally, we turn our attention to the \rto calculation.  Recall, \(\rto_i = \srtt_i + \max(G, 4\cdot \rttvar_i)\) for some constant $G$.  On the one hand, if the $\rttvar$ is consistently very small (less than 1/4 of G) then clearly the $\rto$ is bounded by $[L + G, H + G]$.  In this case, if $G > 2r$, we are assured that timeouts will never happen.  But what if $G$ is small relative to the radius of the steady-state interval?  If $G < 2r$ and the $\rttvar$ can achieve a value $\leq G$ then a timeout can occur.  And in fact, we can easily construct a scenario where exactly this happens infinitely many times.

For the pathological scenario, suppose that every 100$^{\text{th}}$ sample equals 75, while all the rest equal 60.  Clearly the samples are in a $67.5/7.5$ steady-state.  At the spikes (where $\sample_{i + 100n} = 75$), $\srtt \approx 61.88, \rttvar \approx 3.75$, and $\rto \approx 61$.  Since $61 < 75$, a timeout occurs.  There are infinitely many ``spikes'' where timeouts occur.  However, when we simulate a scenario where the samples are uniformly random over $[c-r, c+r]$, little to no timeouts occur.  Both scenarios are illustrated below in Fig.~\ref{fig:scenarios}.\footnote{Adapted from Fig.~3 of~\cite{karn}, first published in volume 14067, page 56, 2023, by Springer Nature.}

\begin{figure}[h]
\centering
\includegraphics[width=\textwidth]{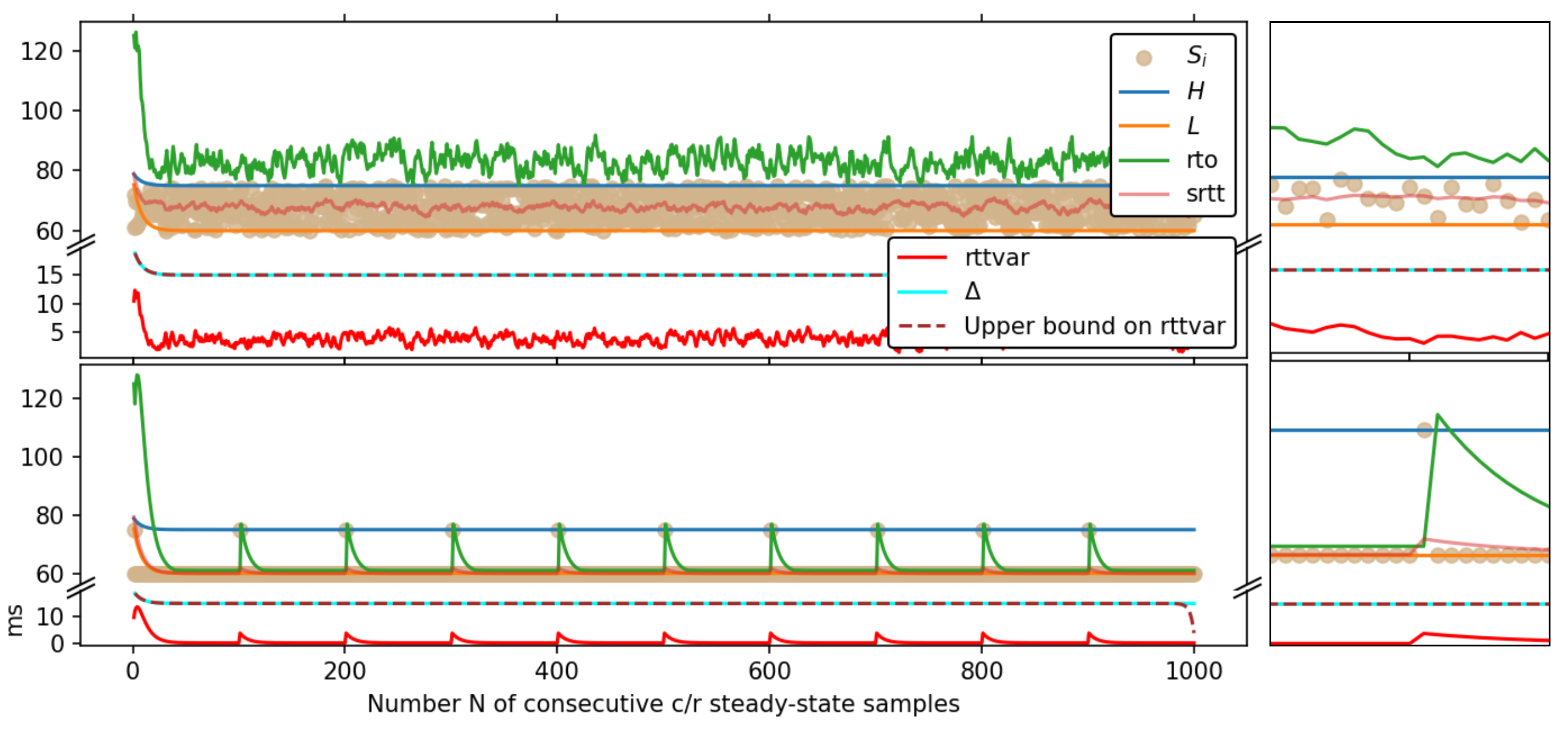}
\caption{On the left are two $67.5/7.5$ steady-state scenarios.  On top the samples are drawn from the uniform distribution over the bounds, and timeouts rarely, if ever, occur.  In the bottom (pathological) scenario, every 100$^{\text{th}}$ sample equals $c+r = 75$ while the rest equal $c-r = 60$, and at each ``spike'', a timeout occurs.  There are infinitely many spikes, and one is shown on the right ($n=[350,450]$).}
\label{fig:scenarios}
\end{figure}

The RTO calculation is specified in RFC6298~\cite{rfc6298} which says the constant~$G$ should be set to the ``clock granularity'' of the sender, in seconds.  In the interest of avoiding excessive timeouts, a protocol implementer might want to consider adding the additional criteria that $G$ should exceed 1/2 the diameter of the maximally large sample interval that they consider to be ``stable''.  Depending on the nature and context of the protocol, this could be either a fixed or dynamic value.  However, caution should be taken on the other hand to ensure $G$ is not too large, since timeouts \emph{should} occur when there really is congestion on the network, in order to avoid congestion collapse.  Also, a dynamic value of $G$ could exacerbate the risk of choosing too large of a timeout value, and might even be vulnerable to targeted manipulation.

As we alluded when introducing Def~\ref{def:steady-state}, one interesting direction for future research is to investigate minimal, sufficient analytic conditions to ensure timeouts do not occur.  For example, it might be sufficient to require the samples be drawn from the image of a Lipschitz continuous function, and then to require 
some relationship between the Lipschitz bound, the choice of $\beta$, and the radius~$r$.
Although this is purely speculative, it is certainly the case that the derivative of the samples
must be taken into account when analyzing the magnitude of the \rttvar, providing 
multiple interesting directions for future research.

\section{Discussion}\label{sec:discussion}
In this work we considered multiple approaches to proving $\forall \alpha \in [0, 1) \,::\, \lim_{n \to \infty} \alpha^n = 0$.  
Our first was in ACL2(r) and 
resembled the 
obvious pen-and-paper proof, but could not be imported into an ACL2 environment.
Our second and third were in ACL2s and required proving some arithmetic lemmas.
Of these, the semi-automated version of the third is the most stylistically aligned with ACL2s because it takes advantage of automated termination analysis.
The ACL2(r) proof has the lowest character count of all the proofs,
and certifies the most quickly,
but imports the most books and is not portable to ACL2.
Considering books and portability, 
the semi-automated binomial proof 
    is probably the best (see Tab.~\ref{tab:comparison}).

\begin{table}[]
\centering
\begin{tabular}{lllllll}
Proof         & LoC & Chars  & Props/Thms & Functions & Books & Cert Time (s) \\\hline
Real          & 161 & 4,224  & 17         & 1         & 5     & 0.58         \\
Ceiling       & 408 & 16,103 & 20         & 3         & 0     & 64.17         \\
Binomial (M)  & 154 & 5,652  & 22         & 1         & 2     & 2.54          \\
Binomial (SA) & 122 & 5,402  & 22         & 2         & 1     & 3.84         
\end{tabular}
\caption{Proof comparison.  (M) refers to ``manual'' while (SA) refers to ``semi-automatic''.  Lines of code and character count are computed without comments or empty lines, however, the proofs are not styled identically.  Props/Thms counts instances of \codify{property} and \codify{defthm}, while Functions counts \codify{definec}s, \codify{definecd}s, and \codify{defun}s.  Certification time is measured on a 16GB M1 Macbook Air.}
\label{tab:comparison}
\end{table}

While working on the third approach we encountered an inconvenience in ACL2s:
in ACL2s, 
\\\codify{definec}s support \codify{function-} and \codify{body-contract-hints}, but
\codify{property} definitions do not.
We found two ways to address this.  The first was to redefine the problematic property as a decision procedure in a \codify{definec} with appropriate
hints, and then write a new property saying that on all inputs, the procedure returns \emph{true}.
The second was simply to set \codify{:check-contracts? nil}.
To ameliorate this issue, we plan to add hints to \codify{property} definitions in a future update to ACL2s.

While writing the ACL2s proofs, we took advantage of ACL2s features not available in ACL2 or ACL2(r).  Most notably, we used termination analysis in the semi-automated binomial proof.  There are also smaller ways that ACL2s was easier: type annotations improved the readability of our proofs, and contract-checking gave us some lemmas for free, including type-checking in the \codify{xargs} to our \codify{defun-sk}.  In contrast, we had to manually prove contracts for $d_\alpha$ in our ACL2(r) proof, and it is harder to read than either ACL2s proof (with lengthier antecedents on \codify{defthm}s) because it lacks type annotations.

\section{Conclusion}\label{sec:conclusion}
Recently, the Internet Engineering Task Force created a Usable Formal Methods Research Group,
of which we are members,
to integrate formal methods into the RFC drafting process.
In many protocols, \emph{performance} matters:
we want to know how quickly the protocol achieves a desired outcome under load.
Usually performance is studied using simulations or measurements, 
which can give a sense of how protocols behave ``in the wild''.  
But what about how protocols behave in the worst case?  
What if the worst case never happens in 
the measured environments or simulations?  For this, we need a 
way to \emph{prove} performance bounds, namely, formal methods.
Meanwhile, real analysis provides a convenient framework with which to ask and answer
questions about performance bounds in the long run.  
In our prior work~\cite{karn}, we used formal methods with real analysis
 to prove useful bounds on the internal variables of the RTO calculation.
But we also ran into hurdles.
We could not use the most obvious proof, which requires real numbers,
because ACL2s only supports rationals.
When we came up with an alternative approach (the ``ceiling proof'')
it required us to convince ACL2s of numerous arithmetic lemmas.
Only post-publication did we find a simpler solution,
based on the binomial theorem.

As relatively novice users of ACL2s\footnote{Excluding the second author.}, our work leads us to identify three areas where we feel the ACL2 ecosystem could be improved to support work such as ours.
First, ACL2 (and ACl2s in particular) 
could benefit from a richer,
more comprehensively documented,
and more easily searchable library of purely mathematical theorems, relating to the ceiling, floor, exponent, and logarithm, as well as metric spaces and limits.  
Searching for proofs is difficult enough, and ACL2 does not come with any kind of semantic proof search tool.
And often, even when the desired theorems exist in the ACL2 books, they are unmentioned in the documentation.
For example, the documentation on ``arithmetic'' does not mention the RTL books, 
and neither does the documentation on ``math''.  Moreover, the rewrite rules from different libraries may conflict,
so even if you find the desired theorems, importing them into a singular environment
may be non-trivial.
It would also be useful to have more mathematics formalized in 
ACL2s, so as to avoid additional proof obligations (for function contracts and termination) when using imported books.
Second, ACL2(r) could benefit from the addition of the generic exponent and logarithm.  
This could be done using the translational Lemma given in $\mathsection$\ref{sec:real}.
Third, and most importantly, though ACL2(r) and ACL2 have incompatible theories, it is nevertheless true that certain kinds of theorems over the reals should hold over the rationals, because the rationals are dense in the reals.  It would useful to have a kind of ``bridge'' between ACL2(r) and ACL2, by which the user could justify that a given theorem, if true over the reals, must also hold over the rationals; prove the theorem in ACL2(r); and then import the theorem, using its ``justification'', into ACL2.
Hopefully our experience provides insight for future work in both protocol analysis and extending the ACL2 ecosystem.

\vspace{1em}

\noindent \textbf{Acknowledgments}. The first author would like to thank Ruben Gamboa for providing technical support in ACL2(r) and suggesting Lemma~\ref{lem:e-n-a-to-a-n}, and Ankit Kumar for providing technical support in ACL2s.

\nocite{*}
\bibliographystyle{eptcs}
\bibliography{main}
\end{document}